  \documentclass[10pt]{article}
  \usepackage{amsmath,amsthm,amsfonts,latexsym,epigraph,accents}

\usepackage{amssymb}
\usepackage{enumitem}
\usepackage{verbatim}
\usepackage[usenames]{color}
\usepackage{url}
\usepackage[pdfpagemode=UseNone,pdfstartview=FitH]{hyperref}

  \newtheorem{theorem}{Theorem}[section]
  
  \newtheorem{proposition}[theorem]{Proposition}
  \newtheorem{lemma}[theorem]{Lemma}
  \newtheorem{corollary}[theorem]{Corollary}

  \theoremstyle{definition}

  \newtheorem{definition}[theorem]{Definition}
  \newtheorem{example}[theorem]{Example}

  \theoremstyle{remark}
  \newtheorem{remark}[theorem]{Remark}

\newcommand\xqed[1]{%
  \leavevmode\unskip\penalty9999 \hbox{}\nobreak\hfill
  \quad\hbox{#1}}
\newcommand\aqed{\xqed{$\triangleleft$}}

\newcommand{\1}{^{-1}}

\newcommand{\B}{\mathcal{B}}

\newcommand{\eps}{\varepsilon}

  \newcommand{\f}{\hat{f}}

\newcommand{\R}{\mathbb{R}}
\DeclareMathOperator{\Range}{Range}
\newcommand{\Q}{\mathbb{Q}}
\newcommand{\T}{\mathcal T}

  \newcommand{\U}{\mathcal{U}}

\renewcommand{\P}{\mathbf{P}}
\renewcommand{\phi}{\varphi}

\newlength{\dhatheight}
\newcommand{\hathat}[1]{%
  \settoheight{\dhatheight}{\ensuremath{\hat{#1}}}%
  \addtolength{\dhatheight}{-0.35ex}%
  \hat{\vphantom{\rule{1pt}{\dhatheight}}%
  \smash{\hat{#1}}}}

  \title{Test statistics and p-values}
  \author{Yuri Gurevich and Vladimir Vovk}

\begin{document}

\maketitle

\begin{abstract}
We point out that the traditional notion of test statistic is too narrow, and
we propose a natural generalization that is arguably maximal. 
The study is restricted to simple statistical hypotheses.

    \bigskip

    \noindent
    The version of this paper at
    \url{http://www.alrw.net/} (Working Paper 16) is updated more often.
\end{abstract}

  \setlength{\epigraphwidth}{0.65\textwidth}

  \epigraph{\ldots  the enormous usefulness of mathematics in the natural sciences is something bordering on the mysterious\ldots}
  {Eugene P. Wigner, 1960}

  \epigraph{\ldots statistics is a branch of applied mathematics, like symbolic logic or hydrodynamics.
    Examination of foundations is desirable,
    but it must be remembered that undue emphasis on niceties is a disease
    to which persons with mathematical training are specially prone.}
  {George A. Barnard, 1947}

\section{Introduction}

The standard definition of the p-value associated with a given test statistic $f$ and outcome $x$ is
\begin{equation}\label{eq:p}
  \f(x) = \P[f\le f(x)]
\end{equation}
where $[f\le f(x)] = \{y: f(y)\le f(x)\}$.
The standard textbook convention is that the test statistic $f$ takes values in the real line.
But the definition of $\f$ requires only that the codomain of $f$ be an ordered measurable space and that the initial segments $(-\infty,f(x)]$ be measurable.
Should we generalize the notion of test statistics?
Would any linearly ordered measurable space work as the codomain of a test statistic $f$ provided that the initial segments $(-\infty,f(x)]$ are measurable?

Our answer to the first question is ``yes''. A modest but useful generalization is already used in applied statistics, albeit implicitly. 
Our analysis turns up a more radical generalization, more natural and arguably 
the maximal generalization that makes sense.

The answer to the second question is an emphatic no.
There exist a probability space $(\Omega,\Sigma,\P)$ and a statistic $f$ with values in a linearly ordered set with measurable initial segments such that every $\f(x) = 0$.
Informally, this makes no sense: we are entitled to reject the null hypothesis whatever happens.
Formally, this contradicts the standard property
\begin{equation}\label{eq:valid}
  \P[\f \le \epsilon] \le \epsilon \qquad
  \text{ for every nonnegative }\eps<1
\end{equation}
of the validity of p-values.

\begin{example}\label{exa:omega1}
The sample space $\Omega$ is the collection (known to set-theorists as $\omega_1$) of countable (that is finite or infinite countable) ordinals.
In set-theory, every ordinal is the set of smaller ordinals: 0 is the empty set, $1 = \{0\}$, $2 = \{0,1\}$, $3 = \{0,1,2\}$,
the first infinite ordinal $\omega_0$ is the set $\{0,1,\ldots\}$ of natural numbers,
$\omega_0+1 = \omega_0 \cup \{\omega_0\}$, $\omega_0+2 = \omega_0 \cup \{\omega_0,\omega_0+1\}$, $\ldots$,
$\omega_0+ \omega_0 = \omega_0 \cup \{\omega_0 + n: n\in\omega_0\}$, and so on.
The first uncountable ordinal $\omega_1$ is the set of countable ordinals.

The $\sigma$-algebra $\Sigma$ consists of all countable subsets of $\Omega$ and their complements, and
\begin{equation*}
  \P(X)
  =
  \begin{cases}
    0 & \text{if $X$ is countable},\\
    1 & \text{if $\Omega - X$ is countable}.
  \end{cases}
\end{equation*}
Finally the statistic $f$ is the identity function: $f(x) = x$.
The order on the codomain is natural:
\[
  x < y
  \iff
  x \in y,
\]
so that
\begin{equation*}
 0 < 1 < \cdots < \omega_0 < \omega_0 + 1 < \cdots
   < \omega_0 + \omega_0 < \cdots.
\end{equation*}
For every countable ordinal $x$, the initial segment $[0,x]$ is countable. Accordingly
\[
  \hat f(x) = \P[f\le f(x)] = \P[0,x] = 0.
  \eqno\aqed
\]
\end{example}

Generalized test statistics have been, albeit implicitly, used in practical statistics. For example, to incorporate the notion of randomized p-values into definition \eqref{eq:p},
one needs generalized test statistics with codomains that are richer than the real line,
as we discuss later in the article.
We analyze what can go wrong with generalized test statistics and arrive at a generalization that is arguably the right one. 
We argue that generalized test statistics should be used more widely
and not necessarily in combination with randomization.

\section{Test statistics}

\subsection{Nominal test statistics}

\begin{definition}
  Let $\T$ be a probability space $(\Omega,\Sigma,\P)$ and $R$ any ordered measurable space with all initial segments $(-\infty,r]$ measurable.
  Any measurable function $f:\Omega\to R$ is a \emph{nominal test statistic} for $\T$.
\end{definition}

Notation $\T$ alludes to ``probability trial''. 
An ordered measurable space is a measurable space endowed with a linear order; in this article, ``ordered'' always means ``linearly ordered''.

The notion of nominal test statistic is auxiliary. As Example~\ref{exa:omega1} shows, a nominal test statistic may be unreasonable. 

To comply with the standard definition of p-values, Equation~\eqref{eq:p} above, we restrict attention to ordered measurable spaces where every initial segment of the form $(-\infty,r]$ is measurable.
Note that if $\Sigma_1, \Sigma_2$ are $\sigma$-algebras on any set $R$ and $\Sigma_1\subseteq \Sigma_2$ then every $R$-valued function that is measurable with respect to $\Sigma_2$ is measurable with respect to $\Sigma_1$. In other words, the smaller the $\sigma$-algebra, the greater the collection of measurable functions. This motivates the following definition.

\begin{definition}
The \emph{p-minimal} $\sigma$-algebra on an ordered set $R$ is the least $\sigma$-algebra on $R$ that contains every initial segment of the form $(-\infty,r]$.
\end{definition}

If $R$ is an ordered measurable space whose $\sigma$-algebra is p-minimal then the measurability requirement for $R$-valued nominal test statistics simplifies to this:
Every set $[f\le r]$ is measurable.

\begin{definition}\label{def:induced-prob}
Let $f$ be a nominal test statistic for a probability trial $(\Omega,\Sigma,\P)$ with codomain $R$. The nominal test statistic $f$ \emph{induces} the probability measure $\P_f(S) = \P(f^{-1}(S))$ on the measurable subsets of $R$.
\end{definition}

\subsection{Traditional and nearly traditional test statistics}

\begin{definition}
A nominal test statistic is a \emph{traditional test statistic} if its codomain is the real line $\R$ with the Borel $\sigma$-algebra.
\end{definition}

The real line is of course the set of real numbers with the standard order.
Its Borel $\sigma$-algebra coincides with its p-minimal $\sigma$-algebra.
In applied statistics, mostly traditional test statistics are used.

We introduce a slight generalization of traditional test statistics that may be convenient. (That is not the radical generalization mentioned in the Introduction.)
To this end, we recall a few definitions and facts. 

Every ordered set (of size at least 1) is equipped with its \emph{order topology}. The segments $(x,y)$, $(-\infty,x)$, $(y,\infty)$, and $(-\infty,\infty)$ form a base of the order topology. In this paper, the order topology is the default topology on ordered sets.

A topology is \emph{second-countable} if it has a countable base of open sets. The following proposition is proved in, e.g., \cite[Theorem~II and Lemma~3]{Cater} and (in a somewhat less explicit form) in  \cite[Theorem~24 in Section~VIII.11]{Birkhoff}. A \emph{jump} in an ordered set $(R,\le)$ is a pair $(x,y)$ of points of $(R,\le)$ such that $x<y$ and there is no $z\in R$ with $x<z<y$. 
 A mapping $f$ whose domain and codomain are both ordered sets is \emph{order-preserving} (or an \emph{embedding}) if $x<y$ implies $f(x)<f(y)$ for all $x$ and $y$ in its domain.

\begin{proposition}\label{prop:Cater}
  A linear order
  can be embedded into the real line $\R$ if and only if its order topology is second countable.
  The order topology of a linear order is second countable if and only if the topology is separable and the order has at most countably many jumps.
\end{proposition}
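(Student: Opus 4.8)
\emph{Proof strategy.} Abbreviate the three conditions as $(a)$ ``$R$ embeds into $\R$'', $(b)$ ``the order topology of $R$ is second countable'', and $(c)$ ``the order topology of $R$ is separable and $R$ has at most countably many jumps''. The proposition asserts $(a)\Leftrightarrow(b)$ and $(b)\Leftrightarrow(c)$, and the plan is to prove the four implications $(a)\Rightarrow(c)$, $(c)\Rightarrow(a)$, $(b)\Rightarrow(c)$, and $(c)\Rightarrow(b)$; together these make all three conditions equivalent and so yield both biconditionals at once. Condition $(c)$ is the natural hub: the implications into it are soft, while the implications out of it rest on essentially one construction.

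\textbf{The implications into $(c)$.} For $(a)\Rightarrow(c)$, fix an order embedding $f\colon R\to\R$. Each jump $(x,y)$ of $R$ produces the nonempty open interval $(f(x),f(y))\subseteq\R$, and two distinct jumps $(x,y)$, $(x',y')$ with $y\le x'$ produce disjoint intervals since then $f(y)\le f(x')$; as $\R$ admits no uncountable family of pairwise disjoint nonempty open sets, $R$ has at most countably many jumps. For separability, $f$ is an order isomorphism of $R$ onto $f(R)$ with its inherited order, hence a homeomorphism for the order topologies; the subspace topology $f(R)$ inherits from $\R$ is second countable, hence separable, and the order topology of $f(R)$ is coarser than that subspace topology, so a countable dense set survives and transports back to $R$. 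For $(b)\Rightarrow(c)$, second countability gives separability immediately; and given a countable base $\{B_n\}$, attach to each jump $(x,y)$ the least $n$ with $y\in B_n\subseteq(x,\infty)$ --- such $n$ exists because $(x,\infty)=[y,\infty)$ is open --- noting that the attachment is injective, since two distinct jumps $(x,y)$, $(x',y')$ with $y\le x'$ receiving the same $n$ would force $y\in B_n\subseteq(x',\infty)$ and hence $y>x'$.

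\textbf{The implications out of $(c)$.} Assume $(c)$, fix a countable dense set $D$, and enlarge it to a still-countable set $E$ by adjoining both endpoints of every jump. The key feature of $E$ is that whenever $x<y$ in $R$ there is $e\in E$ with $x<e\le y$: take a point of $D$ lying in $(x,y)$ if that interval is nonempty, and take $e=y$ if $(x,y)$ is a jump. For $(c)\Rightarrow(a)$, enumerate $E=\{e_1,e_2,\dots\}$ and set $f(x)=\sum_{e_n\le x}2^{-n}$; the series converges, and if $x<y$ then the key feature gives $n$ with $x<e_n\le y$, so $f(y)-f(x)\ge 2^{-n}>0$ and $f$ is an order embedding into $\R$. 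For $(c)\Rightarrow(b)$, the countable collection of all intervals $(a,b)$ and rays $(-\infty,b)$, $(a,\infty)$ with $a,b\in E$, together with $R$ itself, is a base for the order topology: for $x$ in a basic interval $(c,d)$ one finds $a\in E\cap[c,x)$ and $b\in E\cap(x,d]$ --- each exists because, on the relevant side, either the open gap to the endpoint is nonempty and meets $D$, or it is a jump and the endpoint itself lies in $E$ --- whence $x\in(a,b)\subseteq(c,d)$.

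I expect the real work to be the two constructions out of $(c)$, and specifically the bookkeeping around jumps and extreme points: one must see that adjoining jump endpoints to $D$ is precisely what rescues strictness of $f$ at a jump and the base property across a jump, and one must run the base verification through the boundary cases $c=-\infty$, $d=+\infty$, and the presence or absence of $\min R$ and $\max R$, where rays rather than intervals are needed. The general-topology inputs --- second countable implies separable, a subspace of a separable metric space is separable, a coarser topology retains its dense sets, an order isomorphism is a homeomorphism of the order topologies, and the countable chain condition of $\R$ --- are all standard.
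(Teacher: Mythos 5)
The paper does not prove this proposition; it is quoted as a known result with references to Cater and Birkhoff, so there is no internal proof to compare against. Your argument is correct and is essentially the classical one from those sources: routing everything through the hub condition ``separable with countably many jumps,'' adjoining jump endpoints to a countable dense set to get the set $E$, building the embedding via the convergent series $\sum_{e_n\le x}2^{-n}$, and extracting a countable base from intervals with endpoints in $E$. All the delicate points you flag (injectivity of the jump-to-base-element assignment, why $E$ separates $x<y$ even across a jump, the ray cases in the base verification) are handled correctly, so the proof stands on its own.
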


Notice that any embedding of a linear order with second-countable topology into the real line will be measurable:
the pre-image of an initial interval $(-\infty,r]$ of the real line
can be represented as a countable union of initial intervals of this form and is, therefore, measurable.
Now we are ready to introduce the slight generalization of traditional test statistics.

\begin{definition}\label{def:nearly}
  A \emph{nearly traditional test statistic} is a nominal test statistic whose codomain is second-countable.
\end{definition}

Traditionalists may argue that the generalization to the nearly traditional test statistics is  vacuous, and in a sense it is.
By Proposition~\ref{prop:Cater}, any nearly traditional test statistic can be composed with an order embedding to obtain a traditional test statistic.
This implies that any nearly traditional test statistic $f$ can be replaced by a traditional test statistic $f'$;
$f$ and $f'$ will be equivalent in the sense of inducing the same order on $\Omega$:
$f(x)\le f(y)$ if and only if $f'(x)\le f'(y)$.
However, as the following schematic example shows, the nearly traditional test statistic $f$ may be more convenient to work with.
This example is inspired by the literature on randomized p-values but it does not presuppose the knowledge of randomized p-values,
which will be introduced in Section~\ref{sec:randomized}.

\begin{example}\label{ex:lex}
  Let $f$ be a traditional test statistic on a discrete measurable space $(\Omega,\Sigma)$
  with $\Omega$ countable (in which case randomizing p-values becomes particularly useful).
  Let $R$ be the range of $f$ and equip $R$ with its natural order and the p-minimal (i.e., discrete in this case) $\sigma$-algebra.
  Set $\Omega'$ to the real segment $[0,1]$ and $\Sigma'$ to the p-minimal (i.e., discrete) $\sigma$-algebra on $[0,1]$.
  (In the context of randomized p-values, $\Omega'$ is interpreted as the range of random numbers
  generated by a random number generator.)
  Order $R\times[0,1]$ lexicographically, so that
  \[
    (p,r)\le (q,s) \iff p<q \text{ or }
    (p=q \text{ and } r\le s).
  \]
  It is easy to check that the lexicographic order is second-countable
  (this uses the countability of $R$)
  and that the function
  $F(p,r) = (f(p),r)$
  is a nearly traditional test statistic on the product measurable space $(\Omega\times\Omega',\Sigma\otimes\Sigma')$.
  While $F$ is rather natural, an equivalent traditional test statistic may be rather involved;
  think, e.g., of the case where $R$ is the set $\Q$ of rationals.
  \aqed
\end{example}

Radicals may argue that the generalization to the nearly traditional test statistics is too timid,
that there are natural examples of nominal test statistics with more general ordered codomains.
We agree.

\subsection{A general notion of test statistic} 

\begin{definition}
  Let $O$ and $R$ be ordered sets.
  $R$ is \emph{$O$-long} if there is an embedding of (i.e., an order-preserving map from) $O$ into $R$; otherwise $R$ is \emph{$O$-short}.
\end{definition}

In Example~\ref{exa:omega1} we mentioned that $\omega_1$ is the set of all countable ordinals and that ordinals are naturally ordered by inclusion. 
Think of any ordinal $\alpha$ as the linear order of the smaller ordinals, e.g., $\omega_1$ as the linear order of countable ordinals. 

We are particularly interested in $\omega_1$-short ordered sets.
There is a useful positive characterization of such ordered sets.
Recall that sets $X,Y$ of points of an order $\le$ are \emph{cofinal} if for every $x\in X$ there is $y\ge x$ in $Y$ and if for every $y\in Y$ there is $x\ge y$ in $X$
(and \emph{coinitial} is defined symmetrically).

\begin{proposition}\label{pro:from-below}
Let $R$ be an ordered set. The following claims are equivalent.
\begin{enumerate}
\item $R$ is $\omega_1$-short.
\item Every nonempty subset $X$ of $R$ includes a sequence $x_1\le x_2\le\cdots$ cofinal with $X$.
\item Any probability measure $\P$ on $R$ measuring all initial segments $(-\infty,x]$ is continuous from below in the following sense.
For every subset $X$ of $R$, the initial segment $\bigcup_{x\in X} (-\infty,x]$ is measured by $\P$ and
  \begin{equation}\label{eq:LC} 
    \P \left(\bigcup_{x\in X} (-\infty,x]\right) = \sup_{x\in X} \P(-\infty,x].
  \end{equation}
\end{enumerate}
\end{proposition}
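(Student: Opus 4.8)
The plan is to prove the cycle of implications $(1)\Rightarrow(2)\Rightarrow(3)\Rightarrow(1)$, with the bulk of the work in $(1)\Rightarrow(2)$.

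For $(1)\Rightarrow(2)$, I would argue by contraposition. Suppose some nonempty $X\subseteq R$ contains no increasing sequence cofinal with $X$; I want to build an embedding of $\omega_1$ into $R$. Build an increasing transfinite sequence $(x_\alpha)_{\alpha<\omega_1}$ of elements of $X$ by recursion: having chosen $\{x_\beta:\beta<\alpha\}$ for some countable $\alpha$, this is a countable subset of $X$, hence (if it were cofinal with $X$ it would, by taking any enumeration and then the running maxima — but we have no maxima in general, so more care is needed) $\ldots$ here is the subtlety: a countable subset need not contain a cofinal \emph{sequence} of itself in the sense of (2) either. The clean way: if $\{x_\beta:\beta<\alpha\}$ is \emph{not} cofinal with $X$, pick $x_\alpha\in X$ strictly above all of them. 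If the recursion runs for all $\alpha<\omega_1$ we obtain an order embedding $\omega_1\hookrightarrow R$, contradicting (1). So the recursion must halt at some countable stage $\alpha_0$, meaning $\{x_\beta:\beta<\alpha_0\}$ \emph{is} cofinal with $X$. Now reindex this countable cofinal subset as a sequence $y_1,y_2,\dots$ and replace it by its running suprema — but again suprema may not exist. Instead, observe that a countable linear order that is cofinal-with-$X$ itself contains an increasing cofinal sequence: enumerate it as $z_1,z_2,\dots$, and greedily extract $w_1=z_1$, $w_{k+1}=$ the first $z_j$ with $z_j\ge w_k$ and $z_j\ge z_{k+1}$ (such a $j$ exists because $\{z_i\}$ is cofinal in itself as a set); this $(w_k)$ is increasing and cofinal with $\{z_i\}$, hence with $X$. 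This establishes (2).

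For $(2)\Rightarrow(3)$: given such $\P$ and an arbitrary $X\subseteq R$, apply (2) to get $x_1\le x_2\le\cdots$ in $X$ cofinal with $X$. Cofinality gives $\bigcup_{x\in X}(-\infty,x] = \bigcup_{n}(-\infty,x_n]$: every $(-\infty,x]$ with $x\in X$ is contained in some $(-\infty,x_n]$ since some $x_n\ge x$, and conversely each $x_n\in X$. The right-hand side is a countable increasing union of measurable sets, hence measurable, and by ordinary countable additivity (continuity from below for increasing sequences) $\P(\bigcup_n(-\infty,x_n]) = \lim_n \P(-\infty,x_n] = \sup_n \P(-\infty,x_n]$. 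Finally $\sup_n\P(-\infty,x_n]\le \sup_{x\in X}\P(-\infty,x]\le \P(\bigcup_{x\in X}(-\infty,x])$, and the outer terms are equal to what we just computed, so \eqref{eq:LC} holds.

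For $(3)\Rightarrow(1)$, contrapositive again: if $R$ is $\omega_1$-long, fix an embedding $e:\omega_1\to R$ and let $X=\{e(\alpha):\alpha<\omega_1\}$. Push forward onto $R$ the measure on $\omega_1$ from Example~\ref{exa:omega1} (countable sets get $0$, co-countable sets get $1$), i.e.\ set $\P(A)=1$ if $e^{-1}(A)$ contains a co-countable subset of $\omega_1$ and $\P(A)=0$ otherwise; one checks this is a well-defined probability measure on the $\sigma$-algebra generated by the initial segments, and that every $(-\infty,e(\alpha)]$ pulls back to a countable subset of $\omega_1$, so $\P(-\infty,e(\alpha)]=0$ for all $\alpha$, giving $\sup_{x\in X}\P(-\infty,x]=0$; but $\bigcup_{x\in X}(-\infty,x]\supseteq e[\omega_1]$ pulls back to all of $\omega_1$, and one arranges the construction so that this set has measure $1$, violating \eqref{eq:LC}. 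The one point needing care is that $\P$ must be defined on \emph{all} initial segments $(-\infty,r]$ of $R$, not just those of the form $(-\infty,e(\alpha)]$; this is handled by noting that for arbitrary $r\in R$ the set $\{\alpha:e(\alpha)\le r\}$ is an initial segment of $\omega_1$, hence either countable (measure $0$) or co-countable (measure $1$), so the pushforward extends consistently.

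I expect the main obstacle to be the extraction of an increasing \emph{sequence} (not just a countable subset) cofinal with $X$ inside the proof of $(1)\Rightarrow(2)$ — the naive "take running maxima" fails without completeness of the order, so one must do the greedy diagonal extraction described above, and one must be careful that the transfinite recursion genuinely produces an order embedding of $\omega_1$ (strict monotonicity is automatic since each new $x_\alpha$ is chosen strictly above all previous ones).
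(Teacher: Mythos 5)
Your proposal is correct and follows the same overall architecture as the paper's proof (transfinite recursion for the first implication, countable cofinal sequence plus countable additivity for the second, a countable/co-countable $0$--$1$ measure built from an embedding of $\omega_1$ for the third), but it differs in two places worth noting. First, in $1\Rightarrow2$ you make explicit a step the paper compresses: once the recursion halts at a countable stage with a countable set cofinal with $X$, one still has to extract an increasing $\omega$-indexed sequence from it. Your greedy extraction does this correctly, though your parenthetical justification (``cofinal in itself as a set'') is not quite the right reason --- the common upper bound $z_j\ge w_k,\,z_{k+1}$ exists simply because the order is linear, so $\max(w_k,z_{k+1})$ is itself one of the $z_i$. (The paper instead exploits that its partial sequence is already increasing and of countable ordinal length, so a cofinal $\omega$-subsequence can be read off directly.) Second, in $3\Rightarrow1$ the paper constructs its measure via Carath\'eodory's extension theorem and verifies the semi-ring and $\sigma$-additivity conditions by hand; you instead take the pushforward under the embedding $e$ of the countable/co-countable measure on $\omega_1$, and your observation that $e^{-1}(-\infty,r]$ is always an initial segment of $\omega_1$, hence countable or all of $\omega_1$, is exactly the measurability needed to make the pushforward a legitimate probability measure on the p-minimal $\sigma$-algebra. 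This is a genuinely cleaner route that avoids the Carath\'eodory bookkeeping entirely, and it produces the same measure. Two small loose ends: in $2\Rightarrow3$ you should dispose of $X=\emptyset$ separately (the paper uses the convention $\sup\emptyset=0$), and in $3\Rightarrow1$ note that if $\bigcup_{x\in X}(-\infty,x]$ fails to be measurable then claim 3 already fails by its ``is measured by $\P$'' clause, so the argument goes through either way.
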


\noindent
Here and below, presenting a sequence in the form
\begin{align*}
 x_1 < x_2 < \dots, &\text{ or }x_1\le x_2\le \dots, \text{ or }\\
 x_1 > x_2 > \dots, &\text{ or }x_1 \ge x_2 \ge \dots, 
\end{align*}
we presume that the indices range over the positive integers. Also, we use the convention that the supremum of the empty set of probabilities is zero and the infimum of the empty set of probabilities is one. 

\begin{proof}\mbox{}
\begin{description}
\item[$1\implies2$] We prove the implication $1\implies2$ by contrapositive. Assume that $X$ is a nonempty subset of $R$ such that no sequence $x_1\le x_2\le \cdots$ in $X$ is cofinal with $X$. We construct an embedding of $\omega_1$ into $R$.
Choose $\eta(0)$ arbitrarily in $X$. Suppose that $\beta$ is a countable ordinal and a (possibly transfinite) sequence $\langle \eta(\alpha): \alpha < \beta\rangle$ has been constructed. The sequence contains only countably many elements and thus cannot be cofinal with $X$. Choose $\eta(\beta)$ in $X$ greater than all $\eta(\alpha)$ with $\alpha<\beta$. This way we construct the desired embedding $\langle \eta(\alpha): \alpha < \omega_1\rangle$.

\item[$2\implies3$] Assume 2 and fix an arbitrary subset $X$ of $R$. If $X=\emptyset$ then both sides of Equation~\eqref{eq:LC} are zero. Suppose that $X\ne\emptyset$.  By 2, there is a sequence $x_1\le x_2\le \dots$ of points in $X$ cofinal with $X$.
Accordingly, it suffices to prove that 
$\P(\cup_n (-\infty,x_n]) = \sup_n \P(-\infty,x_n]$,
which follows from the countable additivity of probability measures.

\item[$3\implies1$] Again we prove the desired implication by contrapositive. Assume that there is an embedding $\eta$ of $\omega_1$ into $R$ and let $X = \Range\eta$. We construct a probability measure $\P$ for which Equation~\eqref{eq:LC} fails. Let $\Sigma$ be the p-minimal $\sigma$-algebra of $R$. By Carath\'eodory's extension theorem, to define $\P$ on $\Sigma$, it suffices to define $\P$ on the initial segments $(-\infty,x]$ (in such a way that the conditions of the theorem are satisfied, which we will check carefully later). Define $\P(-\infty,x]=0$ if $X\cap(-\infty,x]$ is countable and $\P(-\infty,x]=1$ otherwise. Then the left side of Equation~\eqref{eq:LC} is 1 while the right side is 0.
It remains to check the applicability of Carath\'eodory's theorem.
To make sure $\P$ is defined on a semi-ring, set $\P(\emptyset)=0$ and $\P(x,y]=\P(-\infty,y]-\P(-\infty,x]$ for all $x<y$.  
Let us first check that $\P(x,y]=0$ if $X\cap(x,y]$ is countable and $\P(x,y]=1$ otherwise.
The case where $X\cap(x,y]$ is countable is trivial, so let us assume that $X\cap(x,y]$ is uncountable.
In this case we have $\P(-\infty,y]=1$ and, since $(x,y]$ contains $\eta(\alpha)$ for a countable ordinal $\alpha$, $\P(-\infty,x]=0$;
by definition, this implies $\P(x,y]=1$.
It remains to check that $\P$ is $\sigma$-additive: if $(x,y]=\cup_{n=1}^{\infty}(x_n,y_n]$, where the union is disjoint, then $\P(x,y]=\sum_{n=1}^{\infty}\P(x_n,y_n]$.
This follows immediately from $X\cap(x_n,y_n]$ being uncountable for at most one $n$.
\qedhere
\end{description}
\end{proof}

\begin{remark}
  The argument in the proof of $1\implies2$ is an instance of definition by transfinite induction (in our case, over the countable ordinals).
  For details, see the transfinite recursion theorem in \cite{Halmos:1960}, Section~18.
  It becomes applicable if we fix a choice function that maps every transfinite sequence $\langle \eta(\alpha): \alpha < \beta\rangle$ in $X$
  for every countable ordinal $\beta$ to $\eta(\beta)\in X$ satisfying our desideratum
  (in this particular case, $\eta(\beta)$ being greater than all $\eta(\alpha)$ with $\alpha<\beta$).
\end{remark}

The reverse of $\omega_1$ is known as $\omega_1^*$. In other words, $\omega_1^*$ is the set of all countable ordinals with the reverse order $\alpha < \beta \iff \beta\in\alpha$. By the below/above symmetry, Proposition~\ref{pro:from-below} has the following corollary.

\begin{corollary}\label{cor:from-above}
Let $R$ be an ordered set. The following two claims are equivalent.
\begin{enumerate}
\item $R$ is $\omega_1^*$-short.
\item Every nonempty subset $X$ of $R$ includes a sequence $x_1\ge x_2\ge\cdots$ coinitial with $X$.
\end{enumerate}
Suppose $R$ is $\omega_1$-short.
Any probability measure $\P$ on $R$ measuring all initial segments $(-\infty,x]$ is continuous from above in the following sense. For any subset $X$ of $R$,
\begin{equation}\label{eq:RC}
 \P\left(\bigcup_{x\in X} [x,\infty)\right) = \sup_{x\in X} \P[x,\infty)
\end{equation}
(which includes the existence of all these probabilities).
\end{corollary}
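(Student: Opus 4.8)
The plan is to obtain the whole corollary from Proposition~\ref{pro:from-below} by passing to the reverse order $R^*$ of $R$ and reading off each statement upside down. Order-reversal turns every ingredient of Proposition~\ref{pro:from-below} into the one wanted here: since $(\omega_1^*)^*=\omega_1$, an embedding of $\omega_1^*$ into $R$ is the same thing as an embedding of $\omega_1$ into $R^*$, so $R$ is $\omega_1^*$-short iff $R^*$ is $\omega_1$-short; a decreasing sequence in $R$ is an increasing sequence in $R^*$; a set coinitial with $X$ in $R$ is cofinal with $X$ in $R^*$; and a final segment $[x,\infty)$ of $R$ is an initial segment $(-\infty,x]$ of $R^*$, so ``$\P$ measures every final segment of $R$'' becomes ``$\P$ measures every initial segment of $R^*$''.

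I would first settle the equivalence of claims~1 and~2. Under the dictionary above, claim~1 says exactly that $R^*$ is $\omega_1$-short and claim~2 says exactly that every nonempty subset of $R^*$ includes an increasing sequence cofinal with it, so the equivalence $1\Leftrightarrow 2$ here is literally the equivalence $1\Leftrightarrow 2$ of Proposition~\ref{pro:from-below} for $R^*$; one just re-runs that transfinite-induction argument with the order reversed.

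For the continuity statement~\eqref{eq:RC} I would likewise invoke claim~3 of Proposition~\ref{pro:from-below} in $R^*$, provided $R$ is $\omega_1^*$-short and $\P$ measures every final segment $[x,\infty)$. For nonempty $X\subseteq R$, pick (by claim~2) a decreasing sequence $x_1\ge x_2\ge\cdots$ in $X$ coinitial with $X$; then $\bigcup_{x\in X}[x,\infty)=\bigcup_n[x_n,\infty)$ is a countable increasing union of measurable sets, hence measurable, and countable additivity gives $\P\bigl(\bigcup_n[x_n,\infty)\bigr)=\sup_n\P[x_n,\infty)=\sup_{x\in X}\P[x,\infty)$; the case $X=\emptyset$ is covered by the convention $\sup\emptyset=0$. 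This is nothing but \eqref{eq:LC} read in $R^*$.

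I expect the only real point to watch to be the hypothesis itself: what \eqref{eq:RC} needs is that $R$ be $\omega_1^*$-short (the exact dual of the role of ``$\omega_1$-short'' in \eqref{eq:LC}), and $\omega_1$-shortness alone does not suffice --- on $R=\omega_1^*$, with $\P$ the two-valued measure assigning $0$ to countable and $1$ to co-countable sets and with $X=R$, the left side of \eqref{eq:RC} is $1$ while the right side is $0$. One also wants $\P$ to measure the final segments $[x,\infty)$ themselves; this is automatic if $R$ is moreover $\omega_1$-short, since then $(-\infty,x)=\bigcup_{x'<x}(-\infty,x']$, hence also $[x,\infty)$, is measurable by Proposition~\ref{pro:from-below}, and the union in \eqref{eq:RC} is then automatically measurable because it collapses to $\bigcup_n[x_n,\infty)$. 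Apart from pinning down this hypothesis, there is no genuine obstacle: the corollary is a mechanical dualization of Proposition~\ref{pro:from-below}.
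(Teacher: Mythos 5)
Your argument is correct and is essentially the paper's own: the equivalence of claims 1 and 2 is Proposition~\ref{pro:from-below} read in the reverse order $R^*$, and the only asymmetric ingredient is the measurability hypothesis, which you transfer from initial segments to final segments via $\omega_1$-shortness and part~2 of the proposition, exactly as the paper does. Your closing remark about the hypothesis of \eqref{eq:RC} is also a genuine and correct catch rather than a defect of your proof: the dualized claim~3 requires $R$ to be $\omega_1^*$-short, and your example $R=\omega_1^*$ with the countable/co-countable measure (the reverse of Example~\ref{exa:omega1}, and the same structure the paper itself uses later) shows that $\omega_1$-shortness alone does not give \eqref{eq:RC}. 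So the sentence ``Suppose $R$ is $\omega_1$-short'' must be read as an assumption \emph{additional} to claims 1--2, i.e., the continuity-from-above statement is really about short orders; that is how it is in fact applied later (e.g., in Lemma~\ref{lem:diffuse}), and it is what the paper's proof implicitly assumes when it says that measurability is ``the only condition that breaks the symmetry.''
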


\begin{proof}
  The only condition that breaks the symmetry is that all initial segments $(-\infty,x]$ should be measurable.
  Therefore, it suffices to check that the measurability of all $(-\infty,x]$ implies the measurability of all $[x,\infty)$,
  i.e., the measurability of all $(-\infty,x)$.
  This follows from $R$ being $\omega_1$-short and part~2 of Proposition~\ref{pro:from-below}.
\end{proof}

Since
\begin{align*}
  1-\P\left(\bigcup_{x\in X} [x,\infty)\right) &= \P\left(\bigcap_{x\in X}(-\infty,x)\right)\\
  1 - \sup_{x\in X} \P[x,\infty) &= \inf_{x\in X} \P(-\infty,x),
\end{align*}
\eqref{eq:RC} is equivalent to 
\begin{equation*}  
 \P\left(\bigcap_{x\in X}(-\infty,x)\right) = \inf_{x\in X} \P(-\infty,x).
\end{equation*}

In our context, the symmetry between $\omega_1$ and $\omega_1^*$ is limited.
$\omega_1^*$ is less dangerous than $\omega_1$.
Replacing the order $\omega_1$ in Example~\ref{exa:omega1} by the reverse order $\omega_1^*$ leads to a p-value that is identically equal to 1,
which is a valid p-value  
(and occurs for very simple test statistics,
e.g., whenever the size of $f$'s codomain is 1) albeit not useful. Still, $\omega_1^*$-short orders have desirable properties. Corollary~\ref{cor:from-above} indicates some of them. Others will will be pointed out later. The following definition is borrowed from the literature on linear orders; see \cite{G29} and \cite[p.~88]{Rosenstein}.
  
\begin{definition}  
  A linear order is \emph{short} if it is $\omega_1$-short and $\omega_1^*$-short.
\end{definition}

Now we are ready to introduce our general notion of test statistic.

\begin{definition}\label{def:stat}
  A \emph{test statistic} is a nominal test statistic whose codomain is short.
\end{definition}

\begin{proposition}
Any ordered set with second-countable order topology is short. Thus every nearly traditional test statistic is a test statistic.
\end{proposition}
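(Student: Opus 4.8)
The plan is to reduce the first assertion to the single classical fact that the real line is short, using Proposition~\ref{prop:Cater} together with the obvious heredity of shortness under order embeddings; the second assertion is then a one-line unwinding of definitions.

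First I would record the heredity. Suppose $g\colon R\to R'$ is an order embedding and $R'$ is $O$-short. Then $R$ is $O$-short: if $h\colon O\to R$ were an embedding, the composite $g\circ h\colon O\to R'$ would be an embedding, contradicting the $O$-shortness of $R'$. In particular, if $R'$ is short (i.e., both $\omega_1$-short and $\omega_1^*$-short), then so is every $R$ that embeds into it. Now, by Proposition~\ref{prop:Cater}, any ordered set $R$ with second-countable order topology admits an order embedding into $\R$, so it suffices to prove that $\R$ itself is short.

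To see that $\R$ is $\omega_1$-short I would argue by contradiction. Suppose $\eta\colon\omega_1\to\R$ is an order embedding. Since $\omega_1$ is a limit ordinal, $\alpha+1<\omega_1$ for every $\alpha<\omega_1$, and $\eta(\alpha)<\eta(\alpha+1)$, so we may pick a rational $q_\alpha$ with $\eta(\alpha)<q_\alpha<\eta(\alpha+1)$. If $\alpha<\beta<\omega_1$ then $\alpha+1\le\beta$, hence $q_\alpha<\eta(\alpha+1)\le\eta(\beta)<q_\beta$; thus $\alpha\mapsto q_\alpha$ is a strictly increasing, in particular injective, map from the uncountable set $\omega_1$ into the countable set $\Q$, which is impossible. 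For $\omega_1^*$-shortness, apply the same argument after composing with the order-reversing bijection $x\mapsto -x$ of $\R$, which turns an embedding $\omega_1^*\to\R$ into an embedding $\omega_1\to\R$. Hence $\R$ is short, and by the heredity above so is every ordered set with second-countable order topology.

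The second assertion is now immediate: by Definition~\ref{def:nearly} a nearly traditional test statistic is a nominal test statistic whose codomain is second-countable, so by the first assertion its codomain is short, and therefore it is a test statistic in the sense of Definition~\ref{def:stat}. I do not expect any genuine obstacle here, since the real content is already packaged in Proposition~\ref{prop:Cater}; the only alternative worth noting is that one could instead bypass the embedding and verify condition~2 of Proposition~\ref{pro:from-below} (and its mirror image from Corollary~\ref{cor:from-above}) directly from separability of the order topology, but the reduction to $\R$ is cleaner.
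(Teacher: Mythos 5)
Your proof is correct, but it routes through Proposition~\ref{prop:Cater} differently from the paper. You use the first half of that proposition (second-countable implies embeddable into $\R$), observe that $O$-shortness is inherited backwards along order embeddings, and then verify directly that $\R$ itself is short by squeezing a rational $q_\alpha$ strictly between $\eta(\alpha)$ and $\eta(\alpha+1)$, which injects $\omega_1$ into $\Q$; the $\omega_1^*$ case follows by negation. The paper instead uses the second half of Proposition~\ref{prop:Cater} (second-countable iff separable with at most countably many jumps) and argues inside $R$ itself: it takes a countable dense set $C$ containing all jump points and finds a distinct $c_\alpha\in C$ in each open interval $(\eta(\alpha),\eta(\alpha+2))$, again contradicting countability. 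The underlying counting idea is the same, but your reduction to $\R$ lets you work with the concrete dense set $\Q$ and avoids any fuss about jumps, at the cost of an extra (trivial) heredity lemma; the paper's version avoids the detour through $\R$ and handles the order-preserving and order-reversing cases in one sentence. Both are complete, and your handling of the second assertion (unwinding Definitions~\ref{def:nearly} and~\ref{def:stat}) matches the paper's, which leaves it implicit.
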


\noindent
In particular, the real line $\R$ is short.

\begin{proof}
  Let $R$ be an ordered set with second-countable order topology.
  By Proposition~\ref{prop:Cater}, $R$ is separable and has at most countably many jumps.
  Let $C$ be a countable set that is dense in $R$ and contains all points involved in jumps.
  Suppose toward a contradiction that $\eta$ is an order-preserving or order-reversing map from $\omega_1$ to $R$.
  There is a point $c_\alpha\in C$ in any open interval between $\eta(\alpha)$ and $\eta(\alpha+2)$.
  We obtain uncountably many distinct points $c_\alpha$, which is impossible.
\end{proof}

\section{Induced test statistics, p-functions, and p-values}

\subsection{Induced test statistics}

\begin{definition}
Any test statistic $f$ for a probability trial $(\Omega,\Sigma,\P)$ \emph{induces} a traditional test statistic $\f(x) = \P[f\le f(x)]$ on $(\Omega,\Sigma,\P)$.
\end{definition}

\begin{lemma}\label{lem:self-induced}
If $f$ is any test statistic then $\hathat{f}=\f$. In other words, any induced test statistic is self-induced. 
\end{lemma}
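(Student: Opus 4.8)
The plan is to unwind the definitions and show that $\hathat f$ and $\f$ agree pointwise on $\Omega$. Fix an outcome $x\in\Omega$ and write $p=\f(x)=\P[f\le f(x)]$. By definition, $\hathat f(x)=\P_{\text{w.r.t. }\P}[\f\le \f(x)]=\P\{y:\f(y)\le p\}$. So the task reduces to proving
\[
  \P\{y:\f(y)\le p\} = p = \P\{y:f(y)\le f(x)\}.
\]
The natural route is to show that, up to a $\P$-null set, the two events $\{\f\le p\}$ and $\{f\le f(x)\}$ coincide — or at least that the former is sandwiched between $\{f\le f(x)\}$ and a set of the same probability.

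First I would record the easy inclusion: if $f(y)\le f(x)$ then $[f\le f(y)]\subseteq[f\le f(x)]$, hence $\f(y)\le\f(x)=p$, so $\{f\le f(x)\}\subseteq\{\f\le p\}$; this already gives $\P\{\f\le p\}\ge p$. For the reverse inequality I would analyze the event $A=\{y:\f(y)\le p,\ f(y)>f(x)\}$ and show $\P(A)=0$. The key structural fact is that $\f$ is monotone along the order induced by $f$: since $R$ is short (in particular $\omega_1$-short), Proposition~\ref{pro:from-below} applies to the set $X=\{f(y):\f(y)\le p\}$. I would consider $s=\sup\{r\in\Range f: \P[f\le r]\le p\}$ and distinguish whether this supremum is attained. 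Using continuity from below (Equation~\eqref{eq:LC}) on the family of initial segments $(-\infty,r]$ with $\P(-\infty,r]\le p$, their union $U$ satisfies $\P(U)=\sup\le p$; and $U\supseteq\{f(y):\f(y)\le p\}$ pulled back, while $f^{-1}(U)\supseteq\{f\le f(x)\}$. Combining $\P(f^{-1}(U))\le p$ with $\{f\le f(x)\}\subseteq f^{-1}(U)\subseteq\{\f\le p\}$ and the first inclusion's bound $\P\{\f\le p\}\ge p$ forces all three probabilities to equal $p$, which is exactly $\hathat f(x)=\f(x)$.

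The main obstacle is the careful handling of the supremum $s$ and whether the ``critical'' initial segment is included: one must make sure that when we pull back the union $U=\bigcup\{(-\infty,r]:\P(-\infty,r]\le p\}$ we capture every $y$ with $\f(y)\le p$ and no $y$ with $\P[f\le f(y)]>p$, and that $f^{-1}(U)$ really does have probability exactly $p$ rather than something strictly larger. This is precisely where $\omega_1$-shortness is essential — without continuity from below the union $U$ could have probability strictly exceeding $\sup_r\P(-\infty,r]$, breaking the squeeze. So I expect the bulk of the write-up to be invoking Proposition~\ref{pro:from-below}(3) at the right moment and verifying the two set inclusions $\{f\le f(x)\}\subseteq f^{-1}(U)\subseteq\{\f\le p\}$, after which the equality $\hathat f=\f$ is immediate from the pointwise chain of (in)equalities.
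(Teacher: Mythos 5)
Your argument is correct, and it reaches the conclusion by a route that differs in packaging, though not really in substance, from the paper's. The paper sets $r=f(x)$, writes $[\f\le\f(x)]$ as the disjoint union of $[f\le f(x)]$ and $f^{-1}(S_r)$ with $S_r=\{s>r:\P_f(r,s]=0\}$, and uses part~2 of Proposition~\ref{pro:from-below} (a cofinal increasing sequence in $S_r$) to exhibit $S_r$ as a countable union of null intervals, hence $\P(f^{-1}(S_r))=0$. You instead squeeze: the inclusion $[f\le f(x)]\subseteq[\f\le p]$ gives $\P[\f\le p]\ge p$, while the identity $[\f\le p]=f^{-1}(U)$ for $U=\bigcup\{(-\infty,r]:\P_f(-\infty,r]\le p\}$, combined with continuity from below (part~3 of Proposition~\ref{pro:from-below}), gives $\P[\f\le p]=\P_f(U)=\sup\{\P_f(-\infty,r]:\P_f(-\infty,r]\le p\}\le p$. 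Since your $U$ is precisely $(-\infty,f(x)]\cup S_{f(x)}$, both proofs are measuring the same set; yours trades the explicit null-set computation for a single appeal to continuity from below, which is arguably cleaner, and since the proposition shows parts~2 and~3 are equivalent, nothing is gained or lost in generality --- both isolate $\omega_1$-shortness as the exact hypothesis doing the work. Two points to tighten in a final write-up: the detour through the event $A$ and the supremum $s$ (and whether it is attained) is a red herring that the union $U$ renders unnecessary, and the concluding chain should display both inclusions $[\f\le p]\subseteq f^{-1}(U)$ and $f^{-1}(U)\subseteq[\f\le p]$ --- you do assert each of them, but the displayed chain in your last step records only the second, whereas the squeeze needs the first to get $\P[\f\le p]\le\P_f(U)\le p$.
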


\begin{proof} 
Let $f$ be a test statistic for a probability trial $\T = (\Omega,\Sigma,\P)$ and $R$ be the codomain of $f$. By Definition~\ref{def:induced-prob}, $f$ induces a probability distribution $\P_f(X) = \P(f\1(X))$ on $R$. To simplify notation, we omit the subscript $f$.

Let $\eta(r) = \P[f\le r]$, so that $\eta(f(x)) = \P[f\le f(x)] = \f(x)$. If $s\le r$ then $\eta(s)\le\eta(r)$. On the other hand, 
\begin{align*}
\eta(s)\le\eta(r) 
&\iff\text{either $s\le r$ or else ($s>r$ and $\eta(s)=\eta(r)$})\\
&\iff\text{either $s\le r$ or else ($s>r$ and $\P(r,s]=0$}).
\end{align*} 

For any $r\in R$, the set $S_r = \{s\in R: s>r \text{ and }\P(r,s]=0\}$ is measurable in $R$ and $\P(S_r)=0$. Indeed, by the definition of nominal test statistics, the initial segments $(-\infty,t]$ of $R$ are measurable. Since $f$ is a genuine test statistic (rather than just nominal), $R$ is $\omega_1$-short and so there is a sequence $s_1 \le s_2 \le \dots$ in $S_r$ cofinal with $S_r$ so that $S_r = \bigcup_n (r,s_n]$ and thus is measurable. Further, $\P(\cup_n (r,s_n]) = \lim_n \P(r,s_n] = 0$. Thus
\[
0 = \P(S_r) = \P(f^{-1}(S_r)) =
\P\{y: f(y)>r\text{ and }\eta(f(y))=\eta(r)\}.
\]
Now we are ready to prove $\hathat{f}(x) = \f(x)$. 
\begin{align*}
\hathat{f}(x) &= \P[\f\le\f(x)] = \P\{y: \eta(f(y))\le\eta(f(x))\}\\ 
&= \P\{y: f(y)\le f(x)\} + \P\{y: f(y)>f(x)\text{ and }\eta(f(y))=\eta(f(x))\}\\
& = \P\{y: f(y)\le f(x)\} = \P[f\le f(x)] = \f(x).
\qedhere
\end{align*}
\end{proof}

\begin{theorem} \label{thm:range-exact}
Let $f$ be a traditional test statistic with values in the real segment $[0,1]$. The following claims are equivalent.
\begin{enumerate}
\item $f$ is induced by some test statistic.
\item $f$ is self-inducing, i.e., $\f=f$.
\item $\P[f\le\eps] = \eps$ for every $\eps\in\Range f$.
\end{enumerate}
\end{theorem}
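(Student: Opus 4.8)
The plan is to run the cycle $1\Rightarrow2\Rightarrow3\Rightarrow1$; each implication is short once the definitions are unwound, the only substantive ingredient being Lemma~\ref{lem:self-induced}. First I would record a preliminary observation: $f$ itself is a test statistic in the sense of Definition~\ref{def:stat}, because its codomain $\R$ (equivalently, the segment $[0,1]$ with its natural order) is short, having a second-countable order topology. This is what lets $f$ play the role of its own inducer in claim~1.

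For $1\Rightarrow2$ I would take a test statistic $g$ with $f=\hat{g}$ and apply Lemma~\ref{lem:self-induced} to $g$, obtaining $\hathat{g}=\hat{g}$; since $\f=\hathat{g}=\hat{g}=f$, the statistic $f$ is self-inducing. For $2\Rightarrow3$ I would assume $\f=f$, and for $\eps\in\Range f$ pick $x$ with $f(x)=\eps$, so that $\P[f\le\eps]=\P[f\le f(x)]=\f(x)=f(x)=\eps$. For $3\Rightarrow1$ I would assume $\P[f\le\eps]=\eps$ for every $\eps\in\Range f$; then for each $x$, using $f(x)\in\Range f$, we get $\f(x)=\P[f\le f(x)]=f(x)$, hence $\f=f$, and by the preliminary observation $f$ is a test statistic inducing itself, which gives claim~1.

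I do not expect a real obstacle here: the mathematical content is entirely in Lemma~\ref{lem:self-induced}, and what remains is definition-chasing. The only points calling for care are keeping the single- and double-hat operations straight when invoking that lemma, and the (routine) check that a traditional test statistic valued in $[0,1]$ counts as a general test statistic, so that it may serve as its own inducer.
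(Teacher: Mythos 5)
Your proposal is correct and matches the paper's argument in all essentials: the implication $1\Rightarrow2$ rests on Lemma~\ref{lem:self-induced}, the equivalence of 2 and 3 is definition-chasing, and the remaining direction uses the observation that $f$, having the short codomain $\R$, is a test statistic inducing itself. The only difference is cosmetic --- you arrange the implications as a cycle $1\Rightarrow2\Rightarrow3\Rightarrow1$, while the paper proves $2\Rightarrow1$ directly and then shows $2\Leftrightarrow3$.
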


\begin{proof}
Obviously 2 implies 1. By Lemma~\ref{lem:self-induced}, 1 implies 2. It suffices to prove that 2 and 3 are equivalent.
\begin{description}
\item[$2\implies3$] Assume 2 and let $\eps = f(x)$ for some $x$. We have
\[\P[f\le f(x)] = \f(x) = f(x) = \eps.\]
\item[$3\implies2$] Assume 3. Given a sample point $x$, let $\eps = f(x)$. We have
\[
\f(x) = \P[f\le f(x)] = \P[f\le \eps] = \eps = f(x).
\qedhere
\]
\end{description}
\end{proof}

The definition of the induced test statistic $\f$ can be extended to the case where $f$ is only a nominal test statistic. But the following proposition emphasizes the important role of the property of being $\omega_1$-short.

\begin{proposition}\label{thm:short} 
Let $R$ be an $\omega_1$-long ordered set. There exists an $R$-valued nominal test statistic $f$ such that $\P[\f\le0] = 1$ even though $0\in\Range\f$. 
\end{proposition}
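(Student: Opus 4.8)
The plan is to transplant Example~\ref{exa:omega1} along an order embedding of $\omega_1$ into $R$. Since $R$ is $\omega_1$-long, fix an order-preserving map $\eta\colon\omega_1\to R$. The probability trial will be the one from Example~\ref{exa:omega1}: take $\Omega=\omega_1$, let $\Sigma$ consist of all countable subsets of $\omega_1$ together with their complements, and let $\P$ assign $0$ to the countable members of $\Sigma$ and $1$ to their complements. Equip $R$ with its p-minimal $\sigma$-algebra, and set $f=\eta$.

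First I would check that $f$ is a nominal test statistic, i.e.\ that $[f\le r]$ is measurable for every $r\in R$. Because $\eta$ is order-preserving between linear orders, $\eta(\beta)\le\eta(\alpha)$ holds if and only if $\beta\le\alpha$; hence $[f\le r]=\{\alpha<\omega_1:\eta(\alpha)\le r\}$ is an initial segment of $\omega_1$, and such an initial segment is either a proper one $[0,\gamma)$ with $\gamma<\omega_1$ (hence countable) or all of $\omega_1$. In both cases it lies in $\Sigma$, so $f$ is measurable. Along the way one also notes, as is implicit in Example~\ref{exa:omega1}, that $\Sigma$ really is a $\sigma$-algebra and $\P$ really is countably additive; these are elementary, using that a countable union of countable ordinals is a countable ordinal and that two disjoint co-countable subsets of $\omega_1$ cannot coexist.

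Next I would compute $\f$. For every $\alpha\in\Omega$,
\[
  \f(\alpha)=\P[f\le f(\alpha)]=\P\{\beta<\omega_1:\eta(\beta)\le\eta(\alpha)\}=\P[0,\alpha]=0,
\]
the last equality because $[0,\alpha]$ is a countable subset of $\omega_1$ whenever $\alpha<\omega_1$. Thus $\f$ is identically $0$ on $\Omega$, so $0\in\Range\f$ and $\P[\f\le0]=\P(\Omega)=1$, which is exactly what the proposition claims.

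I do not expect any genuine obstacle: the construction is a direct generalization of Example~\ref{exa:omega1}, and the only points that need a word of care---measurability of $f$ and the fact that the $0$--$1$ set function is a probability measure---are routine. If one prefers, $\Omega$ can instead be taken to be the image $\eta[\omega_1]\subseteq R$ with the inherited order and $f$ the inclusion map; taking $\Omega=\omega_1$ merely keeps the bookkeeping cleanest.
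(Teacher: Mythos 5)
Your proof is correct, but it is organized differently from the paper's. You keep the probability trial equal to the one in Example~\ref{exa:omega1} --- the countable/co-countable measure on $\Omega=\omega_1$ --- and take the statistic to be the embedding $\eta\colon\omega_1\to R$ itself, so all the measure-theoretic verification is inherited from that example and the only new work is your (correct) observation that $[f\le r]$ is an initial segment of $\omega_1$, hence countable or all of $\omega_1$, and in either case measurable. The paper instead places the measure on the codomain: it takes $f$ to be the identity on $R$, lets $\Sigma$ be the least $\sigma$-algebra containing the initial segments $(-\infty,x]$ together with $L=\{x\in R: x\le\eta(\alpha)\text{ for some }\alpha\in\omega_1\}$, and sets $\P(X)$ to $0$ or $1$ according to whether $X$ meets $\Range\eta$ in a countable or uncountable set; this requires an extra dichotomy argument (for every $X\in\Sigma$, either $X$ or $R-X$ meets $\Range\eta$ countably) that your version entirely avoids. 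What the paper's packaging buys is that the pathology is exhibited as intrinsic to $R$ itself viewed as a probability space, with $\f$ equal to $0$ exactly on $L$ and to $1$ elsewhere; what yours buys is economy of verification, plus the slightly more dramatic conclusion that $\f$ is identically $0$. Both arguments establish the stated proposition.
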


\begin{proof}
  Fix an embedding $\eta$ of $\omega_1$ into $R$, and let \[L =\{x\in R: x\le \eta(\alpha)\text{ for some }\alpha\in\omega_1\}.\] 
  Let $\Sigma$ be the least $\sigma$-algebra on $R$ that contains the initial segments $(-\infty,x]$ and also contains $L$.  

  For every member $X$ of $\Sigma$, either $X$ or $R-X$ contains at most a countable subset of $\Range\eta$.
  Indeed, every initial segment of the form $(-\infty,x]$ and $L$ have this property, and the property is preserved by complementation and countable unions. 

  Define a probability measure $\P$ on $\Sigma$ as follows:
  If $X$ contains at most a countable subset of $\Range\eta$ then $\P(X)=0$; otherwise $\P(X)=1$.

  The desired nominal test statistic $f$ is the identity function on $R$.
  It is easy to see that $x\in L$ if and only if $\f(x) = \P[f\le f(x)] = 0$. So $\P[\f\le0] = \P(L) = 1$. 
\end{proof}

\subsection{p-functions and p-values}

We want to define p-functions and p-values in such a way that p-values are the values of p-functions. It is tempting to define a p-function as a traditional test statistic $\f$ induced by some test statistic $f$.
By Theorem~\ref{thm:range-exact}, we have $\P[f\le\eps]=\eps$ for any $\eps\in\Range f$. But in practice people also use conservative p-values. To accommodate this practice, we give a more general definition of p-functions. 

\begin{definition}
\mbox{}
\begin{itemize}
\item A \emph{p-function} is a traditional test statistic $f$ with values in the real segment $[0,1]$ such that $\P[f\le\eps]\le\eps$ for every $\eps\in[0,1]$.
\item A p-function $f$ is \emph{range-exact} if $\P[f\le\eps] = \eps$ for every $\eps\in\Range f$; otherwise $f$ is \emph{conservative}.
\item A p-function $f$ is \emph{everywhere exact} or simply \emph{exact} if $\P[f\le\eps] = \eps$ for every $\eps\in[0,1]$.
  \end{itemize}
\end{definition}

If $f$ is a p-function then $cf$ is a p-function for every $c\ge1$. Indeed
\[
 \P[cf\le\eps] = \P[f\le\eps/c] \le
 \eps/c
 \le \eps.
\]
If $c\in(0,1)$ then $cf$ may not be a p-function.
In particular if $\P[f\le\eps] = \eps$ for at least one $\eps>0$
then $cf$ is not a p-function because, for that $\eps$, we have
\[\P[cf\le c\eps] = \P[f\le \eps] = \eps > c\eps.\]

\begin{theorem}\label{thm:pfunction}
Let $f$ be a traditional test statistic with values in the real segment $[0,1]$. The following claims are equivalent.
\begin{enumerate}
\item $f$ is an induced test statistic.
\item $f$ is a range exact p-function.
\end{enumerate}
\end{theorem}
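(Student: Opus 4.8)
The plan is to prove Theorem~\ref{thm:pfunction} by combining Theorem~\ref{thm:range-exact} with the elementary observation that an induced test statistic (which is automatically a traditional test statistic with values in $[0,1]$, since $\f(x) = \P[f\le f(x)]\in[0,1]$) satisfies the p-function inequality \eqref{eq:valid} on the whole interval, not just on its range.

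\medskip

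\noindent\textit{Direction $1\implies2$.} Suppose $f$ is an induced test statistic, say $f = \hat g$ for some test statistic $g$. By Lemma~\ref{lem:self-induced}, $f$ is self-inducing ($\hat f = f$), so by Theorem~\ref{thm:range-exact} we get $\P[f\le\eps] = \eps$ for every $\eps\in\Range f$; thus $f$ is range-exact. It remains to check the p-function inequality $\P[f\le\eps]\le\eps$ for \emph{all} $\eps\in[0,1]$, not merely those in $\Range f$. Fix $\eps\in[0,1]$. If $\eps\ge\sup\{t\in\Range f: t\le\eps\}$ is attained, i.e.\ there is a largest element $r$ of $\Range f$ with $r\le\eps$, then $[f\le\eps] = [f\le r]$ and $\P[f\le\eps] = \P[f\le r] = r\le\eps$. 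Otherwise pick a sequence $r_1\le r_2\le\cdots$ in $\Range f$ with $r_n\le\eps$ and $r_n\uparrow s := \sup\{t\in\Range f: t\le\eps\}\le\eps$; then $[f\le\eps] = \bigcup_n [f\le r_n]$, so by countable additivity $\P[f\le\eps] = \lim_n \P[f\le r_n] = \lim_n r_n = s\le\eps$. (The case $\{t\in\Range f: t\le\eps\} = \emptyset$ gives $\P[f\le\eps] = 0\le\eps$.) Hence $f$ is a range-exact p-function.

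\medskip

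\noindent\textit{Direction $2\implies1$.} Suppose $f$ is a range-exact p-function. Then by definition $\P[f\le\eps] = \eps$ for every $\eps\in\Range f$, which is exactly condition~3 of Theorem~\ref{thm:range-exact}. By that theorem, condition~3 implies condition~1: $f$ is induced by some test statistic. This completes the equivalence.

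\medskip

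\noindent I do not expect a serious obstacle here; the content is almost entirely bookkeeping on top of the two earlier results. The one point requiring a little care is the $1\implies2$ direction, where range-exactness alone does not immediately give the inequality at points $\eps\notin\Range f$, and one must argue via monotonicity and countable additivity of $\P\circ f^{-1}$ as above (equivalently, invoke that $\P_f$ is a probability measure on the short codomain and use continuity from below, Proposition~\ref{pro:from-below}). If one prefers, this sub-argument can be packaged as: the function $\eps\mapsto\P[f\le\eps]$ on $[0,1]$ is nondecreasing, right-continuous is not needed, and agrees with the identity on $\Range f$, hence lies below the identity everywhere because $\P[f\le\eps] = \sup\{r\in\Range f: r\le\eps\}$ (with $\sup\emptyset = 0$) by the same countable-additivity computation.
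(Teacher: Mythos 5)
Your proof is correct and follows essentially the same route as the paper's: both directions reduce to Theorem~\ref{thm:range-exact}, and the only real work---verifying $\P[f\le\eps]\le\eps$ for $\eps\notin\Range f$---is done, as in the paper, by approximating $\sup\{t\in\Range f: t\le\eps\}$ from below by a monotone sequence in $\Range f$ and invoking countable additivity. Your extra case split (supremum attained or not) is harmless; the paper handles both cases uniformly with a single nondecreasing sequence converging to that supremum.
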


\begin{proof}
2 implies 1 by Theorem~\ref{thm:range-exact}. To prove the other implication, assume 1. By Theorem~\ref{thm:range-exact}, $\P[f\le\eps]=\eps$ for any $\eps\in\Range f$. It remains to prove that $\P[f\le\eps]\le\eps$ for every $\eps\in[0,1]-\Range f$. 

Let $\eps\in[0,1]-\Range f$, $S = \{s\in\Range f: s<\eps\}$ and $\eps_0 =\sup S$.
  If $S=\emptyset$ then $[f\le\eps]=\emptyset$ and $\P[f\le\eps] = 0\le\eps$. Otherwise there is a sequence $s_1 \le s_2 \le \cdots$ of reals in $S$ converging to $\eps_0$. Then
  \begin{multline*}
    \P[f\le\eps] = \P[f\le\eps_0]
    =
    \P
    \left(
      \bigcup_{n=1}^{\infty} [f\le s_n]
    \right)\\
    =
    \lim_{n\to\infty}
    \P
    \left(
      [f\le s_n]
    \right)
    =
    \lim_{n\to\infty}
    s_n
    =
    \eps_0 \le \eps.
    \qedhere
  \end{multline*}
\end{proof}

\begin{definition}
Let $F$ be a p-function for some probability trial $\T = (\Omega,\Sigma,\P)$. For any outcome $x\in\Omega$, the number $F(x)$ is the \emph{p-value} associated with the test statistic $F$ and the outcome $x$. If $F$ is an induced test statistic and $f$ is any test statistic for $T$ inducing $F$ then $F(x)$ is also the p-value associated with $f$ and $x$. If $\P[F\le F(x)] = F(x)$ then the p-value $F(x)$ is \emph{exact}; otherwise it is \emph{conservative}.
\end{definition}

\section{Diffuse test statistics and exact p-functions}

We are particularly interested in exact p-functions $f(x)$, the p-functions with $\P[f\le\eps]=\eps$ for all $\eps\in[0,1]$.
Classical parametric statistics is a rich source of exact p-functions; randomized p-values (discussed in the next section) is another important example.

Recall that an \emph{atom} in a probability space $\T$ is an event (i.e., a measurable set) of positive probability that cannot be split into a disjoint union of two events of positive probability. $\T$ is \emph{diffuse}
if it has no atoms. Every singleton event of positive probability is an atom. In Example~\ref{exa:omega1}, the complement of every countable set is an atom. 

\begin{lemma}\label{lem:diffuse}
Let $R$ be a short ordered set as well as a probability space where all initial segments $(-\infty,r]$ are measurable. $R$ is diffuse if and only if it has no singleton atoms. 
\end{lemma}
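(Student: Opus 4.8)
The plan is to prove the nontrivial direction: assuming $R$ has no singleton atoms, show $R$ is diffuse. So suppose toward a contradiction that $A\subseteq R$ is an atom with $\P(A)>0$. The key idea is to locate the atom more and more precisely using the order structure, cutting $A$ with initial segments $(-\infty,r]$, and exploit that $R$ is short to make the cutting terminate at a single point, which would then be a singleton atom.

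First I would record the basic behavior of atoms under cutting: if $A$ is an atom and $B$ is any measurable set, then either $\P(A\cap B)=0$ or $\P(A\setminus B)=0$; in the latter case $A\cap B$ is again an atom of the same measure. Apply this with $B=(-\infty,r]$. Call $r$ \emph{low} if $\P(A\cap(-\infty,r])=0$ and \emph{high} if $\P(A\setminus(-\infty,r])=0$; every $r$ is low or high (possibly both is impossible since $\P(A)>0$), the low $r$'s form an initial segment and the high $r$'s a final segment, and they partition... almost: there could be a gap. Let $L=\{r: r \text{ low}\}$. Using that $R$ is $\omega_1$-short, part~2 of Proposition~\ref{pro:from-below} gives a sequence $r_1\le r_2\le\cdots$ in $L$ cofinal with $L$, so $\bigcup_n(-\infty,r_n]$ is the union of the low initial segments and has $\P(A\cap\bigcup_n(-\infty,r_n])=0$ by countable additivity; hence its supremum-type boundary behaves well. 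Symmetrically, using $\omega_1^*$-shortness and Corollary~\ref{cor:from-above}, the high $r$'s admit a coinitial decreasing sequence and $\P(A\setminus\bigcap$ of the corresponding final segments$)=0$.

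Next I would argue the low and high parts meet at essentially one point. If some $r$ is both low and high then $\P(A)=\P(A\cap(-\infty,r])+\P(A\setminus(-\infty,r])=0$, contradiction; so no $r$ is both. If every $r$ is low or high, then $L$ and $R\setminus L$ partition $R$, $L$ is a proper initial segment, $R\setminus L$ a proper final segment; taking the cofinal sequence in $L$ and the coinitial sequence in $R\setminus L$, one checks $A$ concentrates (up to null sets) inside $\bigcap_n(r_n,\infty)$ and inside $L$'s complement's complement, forcing $A$ to sit, modulo null sets, on the cut. If the cut is realized by a point $p$ (i.e. $L=(-\infty,p)$ and $p\in R\setminus L$, or $L=(-\infty,p]$), then $\P(A)=\P(A\cap\{p\})$, so $\{p\}$ has positive probability and is itself an atom — a singleton atom, contradicting the hypothesis. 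The remaining case is that the cut is a genuine Dedekind gap (no point realizes it): $L$ has no greatest element and $R\setminus L$ no least. But then the cofinal sequence $r_n$ in $L$ and the coinitial sequence $s_n$ in $R\setminus L$ satisfy $r_n<s_m$ for all $n,m$, and $\bigcup_n(-\infty,r_n]\cup\bigcup_n(s_n,\infty)=R$, so $\P(A)=\P(A\cap\bigcup_n(-\infty,r_n])+\P(A\cap\bigcup_n(s_n,\infty))=0+0=0$, contradiction again.

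The main obstacle, and the step that needs the most care, is the Dedekind-gap analysis: one must verify that $\bigcup_n(-\infty,r_n]$ together with $\bigcup_n(s_n,\infty)$ actually exhausts $R$ when $(L,R\setminus L)$ is a gap, and that these sets are measurable (the initial segments are measurable by hypothesis, and the final segments $(s_n,\infty)=R\setminus(-\infty,s_n]$ are too). Cofinality of $\{r_n\}$ with $L$ and coinitiality of $\{s_n\}$ with $R\setminus L$ are exactly what rule out any point escaping the union. Everything else is bookkeeping with countable additivity; the shortness hypothesis enters precisely twice, once through Proposition~\ref{pro:from-below}(2) for the cofinal sequence and once through Corollary~\ref{cor:from-above}(2) for the coinitial sequence, which is why both halves of shortness are genuinely needed.
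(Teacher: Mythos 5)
Your proof is correct and follows essentially the same route as the paper's: both split $R$ at the cut between the ``low'' points $r$ (those with $\P(A\cap(-\infty,r])=0$) and the ``high'' ones, and invoke $\omega_1$-shortness and $\omega_1^*$-shortness exactly once each to replace the two halves by countable cofinal/coinitial families, forcing the atom onto a single point. The only cosmetic differences are that the paper first reduces to the case $A=R$ (so that every event has probability $0$ or $1$) and then avoids your explicit point-versus-gap case analysis by covering the high part with closed rays $[y,\infty)$, each null because $\P(y,\infty)=0$ and $\P\{y\}=0$ by the no-singleton-atoms hypothesis.
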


\begin{proof}
The ``only if'' implication is trivial. To prove the ``if'' implication,
suppose toward a contradiction that $R$ has no singleton atoms and yet it does have an atom $A$. The subset $A$ by itself is a short ordered set. If $\Sigma$ is the $\sigma$-algebra of $R$, and $\P$ is the probability measure on $\Sigma$, consider the $\sigma$-algebra $\Sigma_A = \{A\cap X: X\in\Sigma\}$ and restrict $\P$ to $\Sigma_A$. In the rest of the proof, we work with $A$. Without loss of generality, we may assume that $A$ is the whole $R$. Accordingly $R$ itself is an atom in $R$.

Obviously $\P(R) = 1$. Since $R$ is an atom, for any $X\in\Sigma$, the probability $\P(X)$ is either 0 or 1. Notice that every singleton set $\{r\}$ is measurable in $R$. Indeed, $(-\infty,r]$ is measurable, so it suffices to prove that $(-\infty,r)$ is measurable. By Proposition~\ref{pro:from-below}, there is a sequence $x_1\le x_2 \le\dots$ converging to $r$, and so $(-\infty,r)$ is a countable union of measurable sets.

Let $I$ be the set of points $x\in R$ with $\P(-\infty,x]=0$ (it is an initial segment, in the sense of containing any $y$ such that $y\le x$ for some $x\in I$). By the continuity from below, see Equation~\eqref{eq:LC},
\[
 \P(I) = \P \left(\bigcup_{x\in I} (-\infty,x]\right)
       = \sup_{x\in I} \P(-\infty,x] = 0.
\]
Let $F$ be the final segment $R-I$. For any $y\in F$, $\P(-\infty,y]=1$ and therefore $\P(y,\infty)=0$. Since singleton sets are measurable in $R$ and $R$ has no singleton atoms, every $\P[y,\infty) = \P\{y\} + \P(y,\infty)=0$. By the continuity from above, see Equation~\eqref{eq:RC},
\[
 \P(F) = \P \left(\bigcup_{x\in F} [x,\infty)\right)
       = \sup_{x\in F} \P[x,\infty) = 0.
\]
Thus $\P(R) = \P(I) + \P(F) = 0$, which gives us the desired contradiction. 
\end{proof}

Call a nominal test statistic $f$ on a probability trial $\T = (\Omega,\Sigma,\P)$ \emph{diffuse} if the probability distribution $\P_f$ that $f$ induces on its codomain $R$ is diffuse. If $R$ is short then, by Lemma~\ref{lem:diffuse}, the test statistic $f$ is diffuse if and only if $\P(f\1(r))= \P_f(r) = 0$ for every point $r\in R$. 

\begin{proposition}
  \mbox{}
  Let $f$ be a diffuse test statistic.
  Then the induced test statistic $\f$ is diffuse.
\end{proposition}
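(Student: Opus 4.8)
The plan is to show that $\hat f$ takes the value $0$ on a $\P$-null set, and then invoke Lemma~\ref{lem:diffuse}. Since $\hat f$ is a traditional test statistic (its codomain is $[0,1]\subseteq\R$, which is short), Lemma~\ref{lem:diffuse} tells us that $\hat f$, as a probability space under $\P_{\hat f}$, is diffuse if and only if it has no singleton atoms; equivalently, $\hat f$ is diffuse if and only if $\P[\hat f = p] = 0$ for every $p\in[0,1]$. So the whole task reduces to proving $\P\{y : \hat f(y) = p\} = 0$ for each real $p$.

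First I would fix $p\in[0,1]$ and analyze the level set $[\hat f = p]$ in terms of $f$. Recall from the proof of Lemma~\ref{lem:self-induced} the function $\eta(r) = \P[f\le r] = \P_f(-\infty,r]$ on the codomain $R$ of $f$, so that $\hat f(y) = \eta(f(y))$. Thus $[\hat f = p] = f^{-1}(\{r\in R : \eta(r) = p\})$, and since $\P = \P_f\circ f^{-1}$ it suffices to show $\P_f\{r : \eta(r) = p\} = 0$. The set $E_p = \{r\in R : \eta(r) = p\}$ is an ``$\eta$-level set''; because $\eta$ is monotone non-decreasing, $E_p$ is an interval (in the order sense) in $R$ — possibly empty, possibly a singleton, possibly larger.

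The key step is to bound $\P_f(E_p)$. Let $a = \inf E_p$ and $b = \sup E_p$ in $R$ (when $E_p\ne\emptyset$). Using that $R$ is short, Proposition~\ref{pro:from-below} and Corollary~\ref{cor:from-above} give a sequence in $E_p$ decreasing to its infimum and one increasing to its supremum, so $\P_f$ is continuous from above and below along these sequences; combined with monotonicity of $\eta$ this forces $\eta$ to be constant equal to $p$ on an initial/final approximation and shows that $\P_f(E_p)$ is squeezed between differences of $\eta$-values that all equal $p$ — formally, $\P_f(E_p) = \lim \eta(b_n) - \lim \eta(a_n) = p - p = 0$, modulo the contributions of the endpoints themselves, which are of the form $\P_f\{a\}$ and $\P_f\{b\}$. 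But $f$ is diffuse, so by the remark following Lemma~\ref{lem:diffuse} every singleton in $R$ has $\P_f$-measure zero, killing the endpoint contributions. Hence $\P_f(E_p) = 0$, so $\P[\hat f = p] = 0$, and Lemma~\ref{lem:diffuse} yields that $\hat f$ is diffuse.

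I expect the main obstacle to be handling the half-open versus closed nature of the interval $E_p$ and the endpoint contributions cleanly: $\eta(r) = \P_f(-\infty,r]$ is left-discontinuous in general, so ``$\eta = p$ on $(a,b)$'' need not extend to $a$ or $b$, and one must carefully use the cofinal/coinitial sequences (available precisely because $R$ is short) to write $E_p$ or its interior as a countable union of half-open intervals whose $\P_f$-measures telescope to $0$, then separately dispose of the two endpoints via diffuseness. Everything else — the reduction through $\eta$, the appeal to Lemma~\ref{lem:diffuse}, the monotonicity of $\eta$ — is routine.
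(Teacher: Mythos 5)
Your proposal is correct and follows essentially the same route as the paper: reduce via Lemma~\ref{lem:diffuse} to showing $\P[\f=\eps]=0$, pass to the level set of $\eta(r)=\P[f\le r]$ in $R$, use shortness to obtain coinitial/coinitial--cofinal monotone sequences so that countable additivity applies, and use diffuseness of $f$ to kill the singleton contribution at the left end. The only difference is bookkeeping---the paper works with the sets $U(r)\subseteq\Omega$ and reuses the computation $\P(U'(r))=0$ from the proof of Lemma~\ref{lem:self-induced}, whereas you push everything forward to $R$ and re-derive that fact for $\P_f(E_p)$ directly---and the endpoint subtlety you flag is exactly the point both arguments resolve with diffuseness.
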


\begin{proof}
The codomain $[0,1]$ of $\f$ is short, and every singleton set in $[0,1]$ is measurable. By Lemma~\ref{lem:diffuse} it suffices to prove that $\P[\f=\eps]=0$ for every $\eps\in\Range\f$. Fix such a number $\eps$. Since $\eps\in\Range\f$, the set $X=\{x: \f(x)=\eps\}\ne\emptyset$. Let $R_0 = \{f(x): x\in X\}$.

We use the notation and results established in the proof of Lemma~\ref{lem:self-induced}. If $x\in X$ and $r=f(x)$, we have $\eps = \f(x) =  \P[f\le r] = \eta(r)$ and so $[\f=\eps] = \{y: \f(y) = \f(x)\} = \{y: \eta(f(y)) = \eta(r)\}$. Let $U(r) = \{y: f(y)\ge r\text{ and }\eta(f(y)) = \eta(r)\}$ and $U'(r) = \{y: f(y)> r\text{ and }\eta(f(y)) = \eta(r)\}$. In the proof of Lemma~\ref{lem:self-induced} we established that $\P(U'(r))=0$. But $U(r) = U'(r) \cup f\1(r)$. Since $f$ is diffuse, $\P(f\1(r))= \P_f(r) = 0$ and so $\P(U(r))=0$. 

If $r=\min R_0$ then $\P[\f=\eps] = \P[\f=\eta(r)] = \P(U(r)) = 0$. Suppose that $R_0$ does not have a minimal element. Since $R$ is short, there exists a sequence $r_1 > r_2 > \dots$ in $R_0$ that is coinitial with $R_0$. We have
\[
 \P[\f=\eps] = \P\left(\bigcup_n U(r_n)\right) = 0.
 \qedhere
\]
\end{proof}

\begin{theorem}\label{thm:atomless}\mbox{}
Let $f$ be a diffuse test statistic with a short codomain. 
Then the induced test statistic $\f$ is an exact p-function.
\end{theorem}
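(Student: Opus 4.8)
The plan is to deduce everything from two facts about $\f$ that are already available. Since $\f$ is an induced test statistic (it is induced by $f$), Theorem~\ref{thm:pfunction} tells us that $\f$ is a range-exact p-function, so $\P[\f\le\eps]=\eps$ for $\eps\in\Range\f$ and $\P[\f\le\eps]\le\eps$ for all $\eps\in[0,1]$. By the preceding proposition $\f$ is diffuse, and since its codomain $[0,1]$ is short, Lemma~\ref{lem:diffuse} gives $\P[\f=\eps]=0$ for every $\eps\in[0,1]$. It therefore remains only to handle $\eps\in[0,1]\setminus\Range\f$; fix such an $\eps$ and set $\eps_0=\sup\{s\in\Range\f:s<\eps\}$ and $\eps_1=\inf\{t\in\Range\f:t>\eps\}$, using the conventions $\sup\emptyset=0$, $\inf\emptyset=1$.

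First I would compute $\P[\f\le\eps]$ exactly as in the proof of Theorem~\ref{thm:pfunction}: since $\eps\notin\Range\f$ we have $[\f\le\eps]=[\f<\eps]=\bigcup\{[\f\le s]:s\in\Range\f,\ s<\eps\}$, an increasing union, so range-exactness and continuity of $\P$ (passing to a cofinal sequence as in that proof) give $\P[\f\le\eps]=\eps_0$.

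The new step is the symmetric computation of $\P[\f>\eps]$. Since $\eps\notin\Range\f$, $[\f>\eps]=[\f\ge\eps]=\bigcup\{[\f\ge t]:t\in\Range\f,\ t>\eps\}$, and for $t\in\Range\f$ diffuseness together with range-exactness gives
\[
  \P[\f\ge t]=1-\P[\f<t]=1-\P[\f\le t]+\P[\f=t]=1-t .
\]
Hence $\P[\f>\eps]=\sup\{\,1-t:t\in\Range\f,\ t>\eps\,\}=1-\eps_1$ (again via a cofinal sequence). Combining the two computations, $\eps_0+(1-\eps_1)=\P[\f\le\eps]+\P[\f>\eps]=1$, so $\eps_0=\eps_1$; and since always $\eps_0\le\eps\le\eps_1$, this forces $\eps_0=\eps_1=\eps$, whence $\P[\f\le\eps]=\eps_0=\eps$, as required.

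The step that genuinely uses the hypothesis is the identity $\P[\f\ge t]=1-t$: without diffuseness one only gets $\P[\f\ge t]\ge 1-t$, and a non-diffuse range-exact p-function need not be exact, so this is precisely where the argument would break. The only remaining care is in the degenerate cases where $\{s\in\Range\f:s<\eps\}$ or $\{t\in\Range\f:t>\eps\}$ is empty: there the corresponding event is empty with probability $0$, the conventions make $\eps_0=0$ or $\eps_1=1$, and $\eps_0=\eps_1$ then forces $\eps=0$ or $\eps=1$, so the desired equality still holds.
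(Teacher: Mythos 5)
Your proof is correct and follows essentially the same route as the paper's: both introduce $\eps_0=\sup\{s\in\Range\f:s<\eps\}$ and $\eps_1=\inf\{t\in\Range\f:t>\eps\}$, compute $\P[\f\le\eps]=\eps_0$ and the complementary probability $1-\eps_1$ by passing to monotone sequences in $\Range\f$, and conclude $\eps_0=\eps_1=\eps$ from the two probabilities summing to $1$. Your version merely makes explicit (via $\P[\f\ge t]=1-t$) the use of diffuseness of $\f$ that the paper leaves implicit in the terms $\P[\f=\eps_0]=\P[\f=\eps_1]=0$.
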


\begin{proof}
By Theorem~\ref{thm:pfunction}, $\f$ is a range-exact p-function, so that $\P[f\le\eps]\le\eps$ for $\eps\in[0,1]$ and $\P[f\le\eps]=\eps$ for all $\eps\in\Range\f$. It remains to prove that $\P[f\le\eps]=\eps$ for every $\eps\in[0,1]-\Range\f$. Fix such a number $\eps$.

Let $\eps_0 = \sup\{\delta\in\Range\f: \delta < \eps\}$ and $\eps_1 = \inf\{\delta\in\Range\f: \delta > \eps\}$. Our convention here is that $\sup\emptyset=0$ and $\inf\emptyset=1$. There exists a sequence $\delta_1\le \delta_2\le \dots$ in $\Range\f$ that converges to $\eps_0$, so that
\[
 \P[\f\le\eps_0] = \lim_n\P[\f\le\delta_n] + \P[\f=\eps_0] = 
 \lim_n \delta_n + 0 = \eps_0.
\]
Similarly, there is a sequence $\delta_1\ge \delta_2\ge \dots$ in $\Range\f$ that converges to $\eps_1$, so that
\[
 \P[\f\ge\eps_1] = \P[\f=\eps_1] + \lim_n\P[\f>\delta_n] = \lim_n (1-\delta_n) = 1 - \eps_1.
\]
We have
\[
  1 
  = \P[\f\le\eps_0] + \P[\f\ge\eps_1] = \eps_0 + (1-\eps_1),
\]
so that $\eps_0 = \eps_1 = \eps$ and $\P[\f\le\eps] = \P[\f\le\eps_0] = \eps_0 = \eps$.
\end{proof}

\begin{proposition}
  If $R$ is an $\omega_1^*$-long (and $\omega_1$-short) linearly ordered set endowed with the p-minimal $\sigma$-algebra
  then there is an $R$-valued diffuse nominal test statistic $f$ such that $\hat f$ is not an exact p-function.
\end{proposition}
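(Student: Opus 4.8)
The plan is to produce a probability trial $(\Omega,\Sigma,\P)$ and a measurable $f\colon\Omega\to R$ whose induced law $\P_f$ on $R$ is atomless, but for which the cdf $G(r):=\P_f(-\infty,r]$ has a \emph{flat interval carrying positive $\P_f$-mass}, so that $\hat f=G\circ f$ has a point mass. Two preliminary observations set the stage. First, since $R$ is $\omega_1$-short, Proposition~\ref{pro:from-below}(3) gives, for every $\eps\in[0,1]$, that $[\hat f\le\eps]=f^{-1}\{r:G(r)\le\eps\}$ is the $\P_f$-preimage of an initial segment and $\P[\hat f\le\eps]=\sup\{G(r):G(r)\le\eps\}\le\eps$; thus $\hat f$ is automatically a $p$-function. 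Second, an exact $p$-function is atomless (its cdf is $\eps\mapsto\eps$, which has no jumps), so it will suffice to exhibit a diffuse nominal $f$ whose $\hat f$ is \emph{not} atomless. Concretely, if $c^-:=\sup\{G(r):G(r)<c\}$ satisfies $c^-<c$ while $\{r:G(r)=c\}\ne\emptyset$, then the same computation as in Lemma~\ref{lem:self-induced} shows $\P_f\{r:G(r)=c\}=c-c^->0$, i.e.\ $\hat f$ has an atom at $c$ and $\P[\hat f\le c]=c^-<c$.

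Next I would build the construction around an order-embedding $\eta\colon\omega_1^*\to R$, which unwinds to a strictly $R$-decreasing transfinite sequence $\langle\eta(\alpha):\alpha<\omega_1\rangle$; because $R$ is $\omega_1^*$-long, no $\omega$-subsequence of the $\eta(\alpha)$ is coinitial with $\{\eta(\alpha):\alpha<\omega_1\}$ (this is exactly the failure, for $\Range\eta$, of the property in Corollary~\ref{cor:from-above}(2)). The intention is to place a unit of $\P_f$-mass so that it is \emph{invisible from below the chain}: every initial segment $(-\infty,\eta(\alpha)]$ should receive only a small amount of mass, while the region at or just above the bottom of the chain receives mass $c-c^->0$ on which $G$ stays constant equal to $c$. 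Since a countable union of the sets $(r,\infty)$ with $r$ running over a coinitial $\omega$-sequence would cover that region, the absence of such a sequence — i.e.\ $\omega_1^*$-longness — is precisely what allows the cdf to jump there without the offending set being null. One then verifies that $f$ is a nominal test statistic (all $[f\le r]$ measurable, using $\omega_1$-shortness to write $(-\infty,r)$ as a countable union of initial segments, and Carathéodory as in the proof of Proposition~\ref{pro:from-below}), computes $G$, identifies the flat interval $J=\{r:G(r)=c\}$, and concludes from the first paragraph that $\hat f$ is a non-exact $p$-function.

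The main obstacle is keeping $\P_f$ atomless. A probability measure on a copy of $\omega_1^*$ equipped with its p-minimal $\sigma$-algebra always has an atom (any countable set is null, any co-countable set has full measure, so the whole space is an atom), so the mass cannot simply be spread along the $\eta$-chain. The construction must therefore interleave the chain with a genuinely rich atomless component — morally, thicken the relevant portion of $R$ so that the p-minimal $\sigma$-algebra admits nontrivial splittings of every positive-measure set — while arranging that this component sits at or above the bottom of the chain so that it does not reopen the jump in $G$. Carrying out this interleaving so that (i) $\P_f$ is atomless, (ii) $J$ still has positive $\P_f$-measure with $G$ constant on it, and (iii) every set used is measurable in the p-minimal $\sigma$-algebra of $R$ and the $\sigma$-additivity bookkeeping survives, is the delicate heart of the argument, and it is the step where $\omega_1^*$-longness is genuinely used (in contrast to Proposition~\ref{thm:short}, where no diffuseness is demanded and the bare $\omega_1$-chain already suffices).
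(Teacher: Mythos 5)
Your proposal is a plan rather than a proof: the trial $(\Omega,\Sigma,\P)$ and the statistic $f$ are never actually constructed, and the step you yourself call ``the delicate heart of the argument'' --- interleaving the $\omega_1^*$-chain with an atomless component so that $\P_f$ has no atoms while the cdf still has a flat interval of positive mass --- is exactly the step that is missing. Worse, that step cannot be carried out for every $R$ covered by the hypothesis. Take $R=\omega_1^*$ itself: it is $\omega_1$-short and $\omega_1^*$-long, and its p-minimal $\sigma$-algebra is precisely the algebra of countable and co-countable sets. Any probability measure on that algebra has an atom: a countable set gets the sum of its singleton masses, so if the singleton masses do not sum to $1$ the co-countable remainder is an atom, and otherwise there is a singleton atom. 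Hence for this $R$ there is \emph{no} nominal test statistic with atomless induced law at all, and the proposition would be false under your reading of ``diffuse.'' The intended (and operative) reading is the one Lemma~\ref{lem:diffuse} supplies for short codomains and the one Theorem~\ref{thm:atomless} actually uses, namely $\P(f^{-1}(r))=0$ for every $r\in R$; for non-short $R$ this is strictly weaker than atomlessness of $\P_f$, and it is the condition you should verify.

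With that reading the construction is short, and it is the paper's: let $\Omega=\omega_1^*$ with the countable/co-countable $\sigma$-algebra and the $0$--$1$ measure ($\P(X)=0$ iff $X$ is countable), and let $f=\eta$ be an embedding of $\omega_1^*$ into $R$. Measurability holds because every nonempty subset of $\omega_1^*$ has a maximum, so $\eta^{-1}(-\infty,r]$ is an initial segment $(-\infty,\beta]$ of $\omega_1^*$, which is co-countable; every point preimage is a singleton, hence null; and $\hat f\equiv 1$ (each $\P(-\infty,\beta]=1$), so $\hat f$ takes too few values to be exact --- no cdf analysis is needed. Finally, a local error in your first paragraph: if $\{r:G(r)=c\}\neq\emptyset$ then $c\in\Range\hat f$ and $\P[\hat f\le c]=\sup\{G(r):G(r)\le c\}=c$, not $c^-$; this is forced by range-exactness (Theorem~\ref{thm:range-exact}). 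Your conclusion that $\hat f$ has an atom of mass $c-c^-$ at $c$ is correct, and non-exactness is witnessed at $\eps\in(c^-,c)$, where $\P[\hat f\le\eps]=c^-<\eps$.
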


\begin{proof}
Consider a trial $\T=(\Omega,\Sigma,\P)$ where $\Omega,\Sigma,\P$ are as follows.
\begin{itemize}
\item $\Omega$ is $\omega_1^*$, i.e., the set of countable ordinals with the reverse order $\alpha < \beta \iff \beta \in \alpha$. 
\item $\Sigma$ is the p-minimal $\sigma$-algebra on $\Omega$. It consists of the countable subsets of $\Omega$ and their complements. 
\item $\P(X) = 0$ if $X$ is countable. In particular, every $\P(-\infty,\alpha]=1$.
\end{itemize} 

Since $R$ is $\omega_1^*$-long, there is an order isomorphism $\eta$ from $\omega_1^*$ into $R$;
$\eta$ is an $R$-valued nominal test statistic on $\T$. Indeed, since the $\sigma$-algebra of $R$ is p-minimal, it suffices to show that every $\eta^{-1}(-\infty,r]\in\Sigma$. Since $\omega_1$ is well-ordered, every $X\subseteq\omega_1$ has a minimal point in $\omega_1$; accordingly every $X\subseteq\omega_1^*$ has a maximal point in $\omega_1^*$.
  In particular, let $y = \max\{\alpha\in\omega_1^*: \eta(\alpha)\le r\}$. Accordingly, $\eta^{-1}(-\infty,r] = \eta^{-1}(-\infty,\eta(y)] = (-\infty,y]\in\Sigma$. 

  Since $\P$ takes only two values, the induced p-function $\hat{\eta}$ takes only two values and thus is not exact.
  \qedhere
\end{proof}

\section{Randomized p-values}
\label{sec:randomized}

In this section we discuss randomized p-values as a natural application of non-traditional test statistics.
Randomized p-values arise naturally in situations where the distribution of the test statistic is not continuous.
They are produced by test statistics whose codomain is $\R\times[0,1]$ with the lexicographic order;
intuitively, we add a random number to a traditional test statistic to break ties if there are any.
This makes the distribution of a randomized p-value uniform over the segment $[0,1]$
(as shown in Theorem~\ref{thm:randomized} below).

  As discussed in Appendix~A, Egon Pearson \cite{Pearson:1950} defended randomized p-values in 1950,
  but he was mainly writing about the abstract notion.
At this time there are at least two (and probably many more) fields of 
statistics where randomized p-values are essential:
multiple hypothesis testing in bioinformatics 
(see, e.g., \cite{Dickhaus})
and on-line testing the hypothesis of exchangeability using conformal martingales
(see, e.g., \cite[Section~7.1]{Vovk/etal:2005}).
In both cases p-values are used repeatedly a large number of times,
and any non-uniformity of their distribution quickly accumulates and destroys the power of the overall procedure.

  Recall that the product $\P_1\times\P_2$ of probability measures $\P_1$ and $\P_2$
  on measurable spaces $(\Omega_1,\Sigma_1)$ and $(\Omega_2,\Sigma_2)$ respectively
  is the unique probability measure on $(\Omega_1\times\Omega_2,\Sigma_1\otimes\Sigma_2)$
  with $\P(X_1\times X_2) = \P_1(X_1) \cdot \P_2(X_2)$ for $X_1\in\Sigma_1$ and $X_2\in\Sigma_2$.
We are essentially in the situation of Example~\ref{ex:lex}, except that the assumption that the codomain of $f$ is countable is dropped.

By Theorem~\ref{thm:range-exact}, the induced p-function $f$ of a test statistic is only guaranteed to satisfy the inequality $\P[f\le\eps]\le\eps$,
and very simple examples show that $\P[f\le\eps]<\eps$ is indeed possible:
e.g., take any $\epsilon\in(0,1)$ when the sample space is a singleton.
Randomized p-values are a way of making $f$ exact, i.e., achieving the equality $\P[f\le\eps]=\eps$ for all $\epsilon\in[0,1]$.
Informally, we enrich our probability space by adding a random number generator and using its output for breaking ties
in values of the test statistic for different outcomes.

We need a couple of auxiliary results.

\begin{lemma}\label{lem:lex}
  If $R$ and $S$ are short orders,
  then the product $R\times S$, ordered lexicographically,
  is short.
\end{lemma}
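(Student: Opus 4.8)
The plan is to prove the contrapositive in each direction, separately handling $\omega_1$-shortness and $\omega_1^*$-shortness of $R\times S$. So suppose $R$ and $S$ are both short, and suppose toward a contradiction that there is an order-preserving map $\eta\colon\omega_1\to R\times S$ (the $\omega_1^*$ case being symmetric, or handled by the same bookkeeping). Write $\eta(\alpha) = (r_\alpha, s_\alpha)$, and consider the first-coordinate sequence $\langle r_\alpha : \alpha<\omega_1\rangle$ in $R$.

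The key observation is a dichotomy based on how often the first coordinate strictly increases. Call $\alpha$ a \emph{jump index} if $r_\alpha < r_{\alpha+1}$; otherwise $r_\alpha = r_{\alpha+1}$ (it cannot decrease, since $\eta$ is order-preserving and $\alpha<\alpha+1$). Note that whenever $\alpha<\beta$ we have $r_\alpha\le r_\beta$, so the first coordinate is (weakly) monotone along $\omega_1$. If the set $J$ of jump indices is uncountable, then $\alpha\mapsto r_\alpha$ restricted to $J$ is already a strictly increasing map (since between two jump indices $\alpha<\beta$ in $J$ we have $r_\alpha < r_{\alpha+1}\le r_\beta$), giving an embedding of an uncountable well-order into $R$ — and any uncountable subset of $\omega_1$ contains a copy of $\omega_1$, contradicting that $R$ is $\omega_1$-short. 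If instead $J$ is countable, then there is a countable ordinal $\gamma$ with no jump index $\ge\gamma$; hence $r_\alpha$ is \emph{constant}, equal to some fixed $r^*\in R$, for all $\alpha\in[\gamma,\omega_1)$. Then $\alpha\mapsto s_\alpha$ on $[\gamma,\omega_1)$ must be strictly increasing: for $\gamma\le\alpha<\beta$, from $\eta(\alpha)<\eta(\beta)$ and $r_\alpha = r_\beta = r^*$, the lexicographic order forces $s_\alpha < s_\beta$. Since $[\gamma,\omega_1)$ is order-isomorphic to $\omega_1$, this embeds $\omega_1$ into $S$, contradicting that $S$ is $\omega_1$-short.

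The symmetric argument, replacing $\omega_1$ by $\omega_1^*$ throughout and ``increasing'' by ``decreasing'' (with the first coordinate now weakly \emph{decreasing} along $\omega_1^*$ and jump indices being those $\alpha$ with $r_\alpha > r_{\alpha+1}$), shows that $R\times S$ cannot be $\omega_1^*$-long either: either uncountably many jumps give an embedding of $\omega_1^*$ into $R$, or eventual constancy of the first coordinate gives an embedding of $\omega_1^*$ into $S$. Since $R$ and $S$ are short, both are impossible, so $R\times S$ is short.

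I expect the main obstacle — really the only subtlety — to be the ordinal bookkeeping: making sure that ``uncountably many jump indices'' genuinely yields an embedding of $\omega_1$ rather than merely of some uncountable linear order, and that ``eventually constant first coordinate'' is correctly justified from countability of the jump set (a countable subset of $\omega_1$ is bounded, since $\omega_1$ has uncountable cofinality). Everything else is a routine unwinding of the lexicographic order definition. It is worth noting that this argument actually shows something slightly stronger and cleaner than needed — that for \emph{any} ordinal-type obstruction, $R\times S$ being long forces one of $R$, $S$ to be long — but for the statement as given, restricting attention to $\omega_1$ and $\omega_1^*$ suffices.
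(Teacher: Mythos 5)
Your Case~1 is fine, but Case~2 contains a genuine gap: from ``$J$ is countable, hence bounded by some $\gamma$'' you conclude that $\langle r_\alpha\rangle$ is constant on $[\gamma,\omega_1)$, and that implication is false. A jump index records a strict increase at a successor step only; a weakly increasing $\omega_1$-sequence can strictly increase exclusively at limit ordinals, in which case it has no jump indices at all while still taking uncountably many values. Concretely, write each countable ordinal as $\alpha=\lambda_\alpha+n_\alpha$ with $\lambda_\alpha$ its limit part ($0$ or a limit ordinal) and $n_\alpha<\omega_0$; then $\alpha\mapsto(\lambda_\alpha,n_\alpha)$ is an order embedding of $\omega_1$ into a lexicographic product whose first coordinate satisfies $r_\alpha=r_{\alpha+1}$ for every $\alpha$ (so $J=\emptyset$) yet is not eventually constant. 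The transfinite induction you would need in order to propagate constancy across $[\gamma,\omega_1)$ breaks down precisely at limit stages --- the ``ordinal bookkeeping'' you flagged as the main subtlety but did not carry out. (Eventual constancy of the first coordinate is in fact true when $R$ is $\omega_1$-short, but deriving it requires the countability of the set of values $\{r_\alpha\}$ together with the regularity of $\omega_1$ and the convexity of the fibers, not the countability of $J$.)

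The repair is to base the dichotomy on the set of \emph{values} of the first coordinate rather than on successor-step jumps, which is essentially what the paper does. The values $\{r_\alpha\}$ form a well-ordered chain in $R$ (since $\alpha\mapsto r_\alpha$ is weakly increasing), so $\omega_1$-shortness of $R$ makes this set countable; since $\omega_1$ is not a countable union of countable sets, some fiber $\{\alpha: r_\alpha=r\}$ is uncountable and hence contains a copy of $\omega_1$, on which $\alpha\mapsto s_\alpha$ is strictly increasing, contradicting $\omega_1$-shortness of $S$. (The paper phrases this positively for an arbitrary well-ordered set $A$ mapped into $R\times S$: each fiber $A_r$ is countable because $S$ is short, only countably many fibers are nonempty because $R$ is short, so $A$ is countable.) Your $\omega_1^*$ case has the same defect and admits the same repair.
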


\begin{proof}
By symmetry it suffices to prove that $R\times S$ is $\omega_1$-short, i.e., that for every well-ordered set $A$, if there exists an order preserving map $\eta: A\to R\times S$, then $A$ is countable. Each $\eta(a)$ has the form $(r_a,s_a)$, and $(r_a,s_a)< (r_b,s_b)$ if and only if either $r_a<r_b$ or else both $r_a=r_b$ and $s_a<s_b$.

Since $S$ is short, the set 
$A_r = \{a: r_a = r\}$ is countable for every $r\in R$. Since $A$ is well-ordered, the subset $\{\min A_r: A_r\ne\emptyset\}$ of $A$ is well ordered. Since $R$ is short, the well-ordered subset $\{\eta(\min A_r): A_r\ne\emptyset\}$ of $R$ is countable, so that there are only countable many nonempty sets $A_r$. Therefore $A$ is a countable union of countable sets, so that $A$ is countable.
\end{proof}

\begin{lemma}\label{lem:product}
Let $(\Omega_1,\Sigma_1,\P_1)$ and $(\Omega_2,\Sigma_2,\P_2)$ be probability spaces where every singleton set is measurable, and form the product probability space
\[(\Omega_1\times\Omega_2,\Sigma_1\otimes\Sigma_2,\P_1\times\P_2).\]
Every singleton set is measurable in the product space. Furthermore, the product space has the property that the probability of every singleton event is zero if at least one of the factors has the property.
\end{lemma}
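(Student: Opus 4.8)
The plan is to observe that both assertions follow immediately from the single fact that a singleton in the product is a measurable rectangle, combined with the defining product identity for $\P_1\times\P_2$ recalled in the paragraph just above the lemma. No inductive or limiting argument is needed.

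First I would handle measurability. Given any point $(\omega_1,\omega_2)\in\Omega_1\times\Omega_2$, write $\{(\omega_1,\omega_2)\} = \{\omega_1\}\times\{\omega_2\}$. By hypothesis $\{\omega_1\}\in\Sigma_1$ and $\{\omega_2\}\in\Sigma_2$, so this set is a measurable rectangle and hence lies in $\Sigma_1\otimes\Sigma_2$ by the very definition of the product $\sigma$-algebra. This proves the first sentence of the lemma.

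Next I would compute the probability. Applying the product formula $\P(X_1\times X_2)=\P_1(X_1)\cdot\P_2(X_2)$ with $X_1=\{\omega_1\}$ and $X_2=\{\omega_2\}$ gives $(\P_1\times\P_2)(\{(\omega_1,\omega_2)\}) = \P_1(\{\omega_1\})\cdot\P_2(\{\omega_2\})$. If every singleton has probability zero under $\P_1$, the first factor vanishes, so the product is $0$ no matter the value of the second factor; the case in which $\P_2$ has the property is symmetric. Since $(\omega_1,\omega_2)$ was arbitrary, this establishes the second sentence.

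There is essentially no obstacle: the only external input beyond elementary set-theoretic bookkeeping is the existence of the product measure and the stated identity on rectangles, both of which were recalled immediately before the lemma (note that uniqueness of $\P_1\times\P_2$ is not even required here, only the product identity itself).
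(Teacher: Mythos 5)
Your proposal is correct and is essentially the paper's own argument: both identify a singleton as the measurable rectangle $\{\omega_1\}\times\{\omega_2\}$ and then apply the product identity $\P(X_1\times X_2)=\P_1(X_1)\cdot\P_2(X_2)$ to conclude. No differences worth noting.
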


\begin{proof}
By the definition of the product, 
$\P(X_1\times X_2) = \P_1(X_1)\cdot \P_2(X_2)$
for any $X_1\in\Sigma_1$ and $X_2\in\Sigma_2$. A singleton set in $\Omega$ has the form $\{x_1\}\times\{x_2\}$ and this is measurable. The second claim follows from the fact that $\P(\{x_1\}\times\{x_2\}) = \P_1\{x_1\} \cdot \P_2\{x_2\}$.
\end{proof}

Now we are ready to address the issue of randomized p-values. We start from our usual setting of a given traditional test statistic $f:\Omega\to\R$
on a trial $\T = (\Omega,\Sigma,\P)$.
The output of a random number generator is modelled as the trial $([0,1],\B,\U)$,
where $\B$ is the Borel $\sigma$-algebra on $[0,1]$ and $\U$ is the uniform probability measure on $([0,1],\B)$.
The overall trial is now the product
\[
  \bar \T
  =
  (\bar\Omega,\bar\Sigma,\bar\P)
  =
  (\Omega\times[0,1],\Sigma\otimes\B,\P\times\U)
\]
and the test statistic $f$ on $\Omega$ is replaced by a finer test statistic
\begin{equation}\label{eq:finer}
  F(x,r)
  =
  (f(x),r)
\end{equation}
on $\bar\Omega$.
The order on the codomain $\R\times[0,1]$ of $F$ is lexicographic,
\[
  (p,r)\le (q,s) \iff p<q \text{ or }
  (p=q \text{ and } r\le s).
\]
Intuitively, this means that the impugning power of our test statistic is determined by $f$,
and the outcome of the random number generator is only used for tie breaking.
Let us call all functions $F$ that can be obtained in this way
\emph{randomized traditional test statistics}.
They have the following useful property.

\begin{theorem}\label{thm:randomized}
  The induced p-function $\hat F$ of any randomized traditional test statistic $F$ is exact, so that $\bar\P[\hat F\le\eps]=\eps$ for any $\epsilon\in[0,1]$.
\end{theorem}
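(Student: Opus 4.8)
The plan is to reduce the claim to an application of Theorem~\ref{thm:atomless}, by verifying that a randomized traditional test statistic $F(x,r)=(f(x),r)$ is a \emph{diffuse} test statistic with a \emph{short} codomain. The codomain is $\R\times[0,1]$ with the lexicographic order; since both $\R$ and $[0,1]$ are short (they have second-countable order topology, hence are short by the proposition following Definition~\ref{def:stat}), Lemma~\ref{lem:lex} gives that the lexicographic product $\R\times[0,1]$ is short. One also checks measurability of $F$: the p-minimal $\sigma$-algebra on $\R\times[0,1]$ is generated by the sets $(-\infty,(q,s)]$, and the $F$-preimage of such a set is $[f<q]\cup([f=q]\times[0,s])$, which lies in $\Sigma\otimes\B$ because $f$ is a Borel test statistic and $[0,s]$ is Borel; so $F$ is a bona fide test statistic on $\bar\T$.

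Next I would verify diffuseness. By the remark following Lemma~\ref{lem:diffuse}, since the codomain is short it suffices to show $\bar\P(F^{-1}(p,r))=0$ for every point $(p,r)\in\R\times[0,1]$. But $F^{-1}(p,r) = [f=p]\times\{r\}$, and $\bar\P = \P\times\U$, so $\bar\P([f=p]\times\{r\}) = \P[f=p]\cdot\U\{r\} = \P[f=p]\cdot 0 = 0$. (This is exactly the second assertion of Lemma~\ref{lem:product}: the second factor $([0,1],\B,\U)$ has the property that every singleton has probability zero, so the product does too — strictly, Lemma~\ref{lem:product} is stated for the p-minimal/discrete $\sigma$-algebra on $[0,1]$, but the computation $\bar\P(\{x\}\times\{r\})=\P\{x\}\cdot\U\{r\}=0$ goes through verbatim with the Borel $\sigma$-algebra.) Hence $F$ is diffuse.

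With diffuseness and shortness in hand, Theorem~\ref{thm:atomless} applies directly and yields that $\hat F$ is an exact p-function, i.e., $\bar\P[\hat F\le\eps]=\eps$ for all $\eps\in[0,1]$, which is the claim. So the proof is essentially a matter of assembling Lemma~\ref{lem:lex} (shortness of the lexicographic product), the measurability check, the one-line diffuseness computation, and Theorem~\ref{thm:atomless}.

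The main obstacle, such as it is, is bookkeeping rather than mathematics: one must be careful that the diffuseness computation $\bar\P(F^{-1}(p,r))=0$ genuinely covers \emph{all} points of the codomain, including points $(p,r)$ with $p\notin\Range f$ (for which $F^{-1}(p,r)=\emptyset$) and the role of $\U\{r\}=0$, and one must make sure the version of Lemma~\ref{lem:product} being invoked matches the Borel setting on $[0,1]$ actually used here — or simply redo the trivial product computation in place. Everything else is a direct citation of results already proved in the excerpt.
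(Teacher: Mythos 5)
Your proposal is correct and follows essentially the same route as the paper's own proof: shortness of the lexicographic product via Lemma~\ref{lem:lex}, diffuseness of $\P_F=\P_f\times\U$ via the singleton computation of Lemma~\ref{lem:product} combined with Lemma~\ref{lem:diffuse}, and then Theorem~\ref{thm:atomless}. (Two trivial notes: the preimage of $(-\infty,(q,s)]$ should read $([f<q]\times[0,1])\cup([f=q]\times[0,s])$, and Lemma~\ref{lem:product} as stated already covers the Borel $\sigma$-algebra on $[0,1]$ since it only requires singletons to be measurable, so your hedge there is unnecessary.)
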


\begin{proof}
By Lemma~\ref{lem:lex}, the codomain $\R\times[0,1]$ of $F$ is short. It is easy to see that all initial segments $(-\infty,x]$ of $\R\times[0,1]$ are measurable. By Lemma~\ref{lem:diffuse}, the $\P_F$-atoms, if any, of $\R\times[0,1]$ are singletons. By Lemma~\ref{lem:product} and $\P_F=\P_f\times\U$ (cf.\ \eqref{eq:finer}), $\P_F$ does not have singleton atoms and thus is diffuse. It remains to apply Theorem~\ref{thm:atomless}.
\end{proof}

It is easy to see that the theorem generalizes to the case where the component $f$ of test statistic $F$ is any test statistic.

The value $\hat F(x,r)$ is the randomized p-value corresponding to an outcome $x$ and random number $r$.
But it is easy to see that the topology of the lexicographically ordered $\R\times[0,1]$ is not second-countable (not even separable).
According to Definition~\ref{def:nearly} and Proposition~\ref{prop:Cater},
the function $F$ defined by \eqref{eq:finer} is not a traditional or even nearly-traditional test statistic.
But $\R$ and $[0,1]$ are short, and so $F$ is a test statistic according to Definition~\ref{def:stat},
as Lemma~\ref{lem:lex} shows.

\begin{remark}
  When using randomized p-values, statisticians (and computer scientists in related areas) do not usually emphasize
  the use of non-traditional test statistics, which remain implicit.
  They prefer to define randomized p-values from scratch rather than using the standard definition \eqref{eq:p}.
  Namely, the usual definition (as given in, e.g., \cite{Dickhaus} and \cite{Vovk/etal:2005}) is
  \begin{equation}\label{eq:usual}
    \hat F(x,r)
    =
    \P[f<f(x)]
    +
    r
    \P[f=f(x)],
  \end{equation}
  where $r$ is a random number in $[0,1]$.
  The only explicit use of the lexicographic order in connection with randomized p-values
  that we are aware of is in \cite[p.~91, (3.4)]{Coudin:2007}.
\end{remark}

\section{An alternative to randomizing p-values}

Many practical statisticians dislike the idea of randomized p-values.
A pioneer of randomized p-values (Stevens on the last page of \cite{Stevens:1950}) says:
``We suppose that most people will find repugnant the idea of adding yet another random element
to a result which is already subject to the errors of random sampling.''
When reporting on Anscombe's previous work \cite{Anscombe:1948}
he says that the method ``was there dismissed rather briefly as being unsatisfactory''.
Egon Pearson \cite{Pearson:1950} was more positive but still admitted that there are
``a number of objections'' to the use of the method,
``which many statisticians would regard as decisive.''

A popular alternative to randomized p-values is mid-p-values,
introduced by Lancaster in 1961 \cite{Lancaster:1961},
his motivation being that in some cases computing randomized p-values may be
``time-consuming and even embarrassing to the statistician.''
The \emph{mid-p-value} is defined to be the following modification of~\eqref{eq:usual}:
\begin{equation}\label{eq:mid-p-value}
  \P[f<f(x)]
  +
  \frac12
  \P[f=f(x)];
\end{equation}
in other words it is defined to be the arithmetic mean of $\P[f<f(x)]$ and $\P[f\le f(x)]$
(whereas the randomized p-value is distributed uniformly between $\P[f<f(x)]$ and $\P[f\le f(x)]$).
The corresponding \emph{mid-p-function} maps each outcome $x\in\Omega$ to the mid-p-value \eqref{eq:mid-p-value}.
The definition of mid-p-values is natural,
but the main problem with it is that mid-p-functions are not guaranteed to be p-functions
(and they are not p-functions in interesting cases).

In randomized p-values we complement a given test statistic by a random number to break ties.
We can easily imagine less repugnant (to use Stevens's expression) ways of tie-breaking
using the lexicographic order on $\R^2$ (or $\R^k$ for $k>2$).
In this section we will discuss several specific examples,
but it will be clear that the approach is general.
Although mathematically less elegant than randomizing p-values,
it reduces the conservativeness of p-functions while maintaining their validity.

Wilcoxon's rank-sum test \cite{Wilcoxon:1945,Deuchler:1914} is the workhorse of nonparametric statistics.
It is used for comparing two groups of observations (real numbers),
  \[
    x_1,\ldots,x_m \text{ and } y_1,\ldots,y_n;
  \]
for simplicity we will assume that these observations are all different.
The test statistic $R_x$ is the sum of the ranks of the first $m$ observations, in this case the sum of the ranks of $x_1,\ldots,x_m$,
where all $m+n$ observations are ranked from 1 (the smallest) to $m+n$ (the largest).
The p-value corresponding to the given value $R_x$ of the test statistic is the probability that $R'_x\le R_x$.
Here $R'_x$ is obtained by applying the test statistic to a random permutation $z_1,\ldots,z_{m+n}$
of the $m+n$ observations; in other words $R'_x$ is the sum of the ranks of observations $z_1,\dots,z_m$.
Intuitively, we are testing the null hypothesis that all $m+n$ observations are drawn independently
from the same continuous distribution on the real line
against the alternative that the $x$s tend to be smaller than the $y$s.

Wilcoxon's rank-sum test is remarkably efficient
(see, e.g., \cite[Section II.4]{Lehmann:2006}),
but its p-values have a discrete distribution;
first of all, it is clear that this distribution is concentrated on the set
\[
  \left\{
    \frac{1}{\binom{m+n}{m}},
    \frac{2}{\binom{m+n}{m}},
    \ldots,1
  \right\},
\]
as for any permutation test.
For example, when $m=n=6$, it is concentrated on the set $\{1/924,2/924,\ldots,1\}$.
However, Table~5.1 in \cite[Chapter~5]{Pratt/Gibbons:1981} shows that Wilcoxon's test statistic
is much cruder: it takes values in the set
\[
  \{1/924, 2/924, 4/924, 7/924, 12/924, 19/924, 30/924, 43/924, 61/924,\ldots,1\}
\]
(where the ellipsis does not imply that the reader is supposed to be able to fill in the missing values).

To partially break the ties between the values of the test statistic
(this is the only kind of ties we are interested in since all observations were assumed to be different),
we can complement $R_x$ by the value $T_2$ of the Fisher--Yates--Terry statistic,
which is computed similarly to $R_x$ but applies a monotonic transformation to all the ranks changing the sums of ranks accordingly.
The new test statistic, $(R_x,T_2)$, takes values in $\R^2$ equipped with the lexicographic order.
The p-value corresponding to the given value of $(R_x,T_2)$
is the probability that $(R'_x,T'_2)\le(R_x,T_2)$,
where $(R'_x,T'_2)$ is computed by applying the test statistic
to a random permutation of our $m+n$ observations.
The range of the new p-function will be extended by adding the points in
\begin{multline*}
  \{5/924, 8/924, 10/924, 14/924, 15/924, 17/924, 21/924,\\
  22/924, 24/924, 26/924, 28/924, 32/924, 34/924, 35/924,\\
  37/924, 39/924, 40/924, 42/924, 48/924, 49/924,\ldots\},
\end{multline*}
as the same table in \cite{Pratt/Gibbons:1981} shows
(although we suspect that Pratt and Gibbons's results may be affected by the limited numeric accuracy of their calculations).

Lots of numbers of the form $k/924$, $k=1,2,\ldots$, are still missing,
so we might add the van der Waerden statistic $T_3$,
which is similar to $T_2$ but uses a slightly different monotonic transformation.
Now the combined test statistic $(R_x,T_2,T_3)$ takes values in $\R^3$ with the lexicographic order.
However, adding $T_3$ will only add one number, $41/924$,
to the intersection of the range of the p-function and $[0,49/924]$.
The reason for this poor tie-breaking performance of $T_3$
is that the test statistics $T_2$ and $T_3$ are so similar:
both are based on monotonic transformations of ranks defined in terms of the Gaussian distribution.
One way to break the ties more efficiently is to replace $T_3$
by a test statistic analogous to $T_2$ or $T_3$ but based on, e.g.,
monotonic transformations of ranks defined in terms of the Laplace distribution
(popular in robust statistics).

An advantage of all these non-traditional test statistics is that the corresponding p-function will be valid
(in the sense of \eqref{eq:valid})
whenever the observations are generated independently from the same continuous probability distributions on $\R$,
and no parametric assumptions are required.
A disadvantage is that even in the ideal situation (from the point of view of breaking ties)
the range of the test statistic is $\{1/924,2/924,\ldots,1\}$
(where we revert to the normal use of the ellipsis: the reader is expected to fill it in).
Therefore, the distribution is still not uniform on $[0,1]$,
although it is uniform on $\{1/924,2/924,\ldots,1\}$ (and so ``almost uniform'' on $[0,1]$).
In the rest of this section we will assume that we are in this ideal situation.

A drastic step perfectly breaking all ties (with probability one)
but partly sacrificing the non-parametric character of the test
is to add Student's \cite{Gosset:1908} $t$-statistic
\[
  t
  =
  \frac{\bar x - \bar y}{S},
\]
where we ignore an irrelevant constant factor and use the notation
\begin{align*}
  \bar x &= \frac1m \sum_{i=1}^m x_i, \qquad \bar y = \frac1n \sum_{i=1}^n y_i,\\
  S &= \sqrt{\sum_{i=1}^m(x_i-\bar x)^2 + \sum_{i=1}^n(y_i-\bar y)^2},
\end{align*}
to the list $(R_x,T_2,\ldots)$,
with the order on $(R_x,T_2,\ldots,t)$ still being lexicographic.
The p-value corresponding to a given value of $(R_x,T_2,\ldots,t)$
is the probability that $(R'_x,T'_2,\ldots,t')\le(R_x,T_2,\ldots,t)$,
where $(R'_x,T'_2,\ldots,t')$ is computed by applying the test statistic to a random sample of size $m+n$
drawn independently from the standard Gaussian distribution.
Because of the nonparametric nature of the test statistics preceding $t$,
the resulting p-function $f$ will satisfy $\P[f\le\eps]=\eps$
for any $\eps\in\{1/924,2/924,\ldots,1\}$ and the power $\P=P^{m+n}$
of any continuous probability distribution $P$ on $\R$.
On the other hand, we will have $\P[f\le\eps]=\eps$
for any $\eps\in[0,1]$ and the power $\P=P^{m+n}$ of any Gaussian distribution $P$ on the real line.

\begin{remark}
  In principle, we could have used only traditional test statistics in this section
  since even the most complicated of our test statistics,
  $(R_x,T_2,\ldots,t)$ used in the last paragraph,
  had all components but one taking values in discrete sets.
  As we know, such orders can be embedded in the real line.
  However, the resulting traditional test statistic would be awkward,
  and it is much more natural to think in terms of the original test statistic,
  such as $(R_x,T_2,\ldots,t)$
  (cf.\ Example~\ref{ex:lex}).
  And even for a practical statistician,
  it may be reassuring to know that she is on safe ground when using any $\R^k$-valued test statistics
  (with the lexicographic order on $\R^k$).
\end{remark}

\section{A summary}

This paper's aim has been to investigate advantages and drawbacks of various classes of nominal test statistics.
If forced to choose one class,
our recommendation would be to use the class of test statistics
(i.e., nominal test statistics with short codomains).
In view of Theorem~\ref{thm:range-exact} and Proposition~\ref{thm:short},
this will lead to valid p-values (but possibly conservative p-functions).
By Theorem~\ref{thm:atomless},
the corresponding p-functions will be exact in the case of diffuse test statistics.
Finally, Theorem~\ref{thm:randomized} is applicable to any test statistic
(as we say after its proof)
and allows us to define exact p-functions by using the device of randomization.

\subsection*{Acknowledgments}

We thank Andreas Blass, Steffen Lauritzen, and Glenn Shafer for useful comments on drafts of this article.

  \appendix
  \section{History of p-values}

  In the applications of statistics,
  the results of hypothesis testing are almost invariably packaged as p-values.
  However, it's difficult to pinpoint when exactly the term was introduced;
  it developed slowly and informally.
  The expression ``p-value'' has a whiff of barbarity about it,
  and there are at least a dozen ($3\times2^2$) different ways of writing it.
  Do you capitalize ``p''?
  (If not in general, do you capitalize it at the start of a sentence?)
  Do you set it in italics?  Do you put a hyphen after it?
  We will ignore such differences in this discussion.
  But even then, the list of alternative expressions is bewildering.
  Herbert A. David's list \cite[p.~211]{David:2001} includes 9 such expressions:
  probability level, sample level of significance, observed significance level, significance probability,
  descriptive level of significance, critical level, significance level, prob-value, and associated probability.
  And that list is clearly incomplete: e.g., ``achieved significance level'' (usually abbreviated to ASL)
  and ``attained significance level'' are also popular.

  The situation with the notion (rather than the term) of p-value is different,
  and it has a long and venerable history.
  To our knowledge, the first p-value was computed by John Arbuthnott in 1710 \cite{Arbuthnott:1710};
  having observed that the number of male births in London exceeded the number of female births
  during each of the 82 years from 1629 to 1710,
  he rejected the null hypothesis of even chances for the birth of male or female.
  He evaluated the p-value, which he referred to using the generic term ``lot'', as $2^{-82}$.
  This was, however, a very special case as the value attained by his chosen test statistic
  (the number of years in which more males were born)
  was extreme (82 out of 82).
  The p-value was one-sided, which was justified because, in Arbuthnott's words,
  ``the external accidents to which are males subject (who must seek their food with danger)
  do make a great havock of them.''

  As far as we know,
  the first clean calculation of a p-value corresponding to a non-extreme value of a test statistic
  was that given by Daniel Bernoulli in his 1735 paper \cite{Bernoulli:1735}.
  In 1732, the Academy of Sciences of Paris had set a prize for the following problem:
  ``What is the physical cause of the inclination of the planes of the planetary orbits
  in relation to the plane of revolution of the Sun about its axis;
  and what is the reason for the inclinations of these orbits to differ among themselves''
  (translation by Hald \cite{Hald:1998}).
  None of the memoirs submitted was deemed to be worthy of the prize,
  and in 1734 the Academy proposed the same subject again with a double prize,
  which was awarded to John Bernoulli and his son Daniel.
  The key problem to explain was the small inclinations of the planetary orbits of all six known planets
  (Mercury, Venus, Earth, Mars, Jupiter, and Saturn)
  to the plane of the Sun's equator.
  Before looking for a physical cause,
  Daniel Bernoulli set out to investigate whether chance alone would be a feasible explanation.
  In modern terminology, his null hypothesis was that the six inclinations were chosen randomly
  (from the uniform distribution on the interval $[0,90^{\circ}]$),
  and his test statistic was their maximum.
  He computes the actual value of the test statistic as $7^{\circ}30'$, achieved for the Earth
  (according to 1701 data that he used),
  and finds the p-value as
  \[
    (7^{\circ}30'/90^{\circ})^6
    =
    1/12^6
    =
    1/2,985,984
  \]
  (\cite{Bernoulli:1735}, pp.~98--99, \cite{Todhunter:1865}, p.~223, \cite{Hald:1998}, p.~69).
  The way he states this result, however, is reminiscent of the modern confusion between p-values and probabilities:
  ``if all the orbits [orbital planes] were placed randomly with respect to the Sun's equator,
  I would bet 2985983 against 1 that they would not be so close.''
  We can criticize, on physical grounds, his choice of the null hypothesis,
  but from the statistical point of view his calculation is sound
  (which cannot be said about two other, more Earth-centered, p-values calculated in the same paper).

  Another influential (albeit less clean) calculation of a p-value was that in Laplace's 1823 paper \cite{Laplace:1823}
  devoted to atmospheric tides,
  a more difficult object of study than ocean tides.
  It is described in detail in, e.g., \cite[Section~4]{Stigler:1975} and \cite[Chapter~4]{Stigler:1986book},
  and it is given by David \cite{David:2001} as the first (?) appearance of the notion of p-value
  (``(?)'' standing for ``to his knowledge'').
  In the first version of this technical report that paper is cited
    as the first known to us p-value corresponding to a non-extreme value of a test statistic.

  Karl Pearson in his famous 1900 paper \cite{Pearson:1900} about the $\chi^2$ test
  and Ronald A. Fisher in his 1925 textbook \cite{Fisher:1925book}
  initiated the large-scale use of p-values.
  Pearson used $P$ as his notation for p-values,
  and on three occasions referred to it as ``the value of $P$''
  (crucially, the caption of his table of p-values for $\chi^2$
  given at the end of the paper is ``Values of $P$ for\ldots'').
  Fisher's textbook used ``value of $P$'' and simply ``$P$'' interchangeably.
  According to David \cite{David:1998},
  the ``value of $P$'' first (?) morphed into ``$P$ value'' in \cite[Remark on p.~30]{Deming:1943}
  (which is a great progress since his previous paper \cite{David:1995}, according to which this happened in a 1960 book).
  Deming's use of the form ``$P$ values''
  is consistent with his use of the notation $P(f)$ to refer to the p-values produced by a test statistic $f$
  (such as Karl Pearson's $\chi$ or Fisher's $z$).

  Randomized p-values might have been first introduced in print explicitly and in a fairly general form
  (for integer-valued test statistics) by Stevens in 1950 \cite[Section~4]{Stevens:1950}.
  Shortly before that, Anscombe in his discussion paper \cite{Anscombe:1948}
  had introduced randomized p-values in the special cases of Fisher's exact test
  (Section~5.07)
  and confidence bounds for the parameter of the Bernoulli model
  (Section~5.17, slightly less explicitly).
  But even at the time, this was not a novel idea:
  e.g., Egon Pearson in his 1950 paper \cite{Pearson:1950} defending randomized p-values says:
  ``The possibility of this conversion
  has been recognized by statisticians for a number of years''
  (Section~1);
  here ``conversion'' is his term for complementing the test statistic
  by a separate ``random experiment.''
  At about the same time, but less explicitly, randomized p-values were used
  by Eudey \cite{Eudey:1949} and Tocher \cite{Tocher:1950}.
  Randomized p-values were later used in the well-known book \cite{Pratt/Gibbons:1981} (Section~1.5.5).
\end{document}